\documentclass[conference,compsoc]{IEEEtran}

\ifCLASSOPTIONcompsoc
  \usepackage[nocompress]{cite}
\else
  \usepackage{cite}
\fi

\ifCLASSINFOpdf
  \usepackage[pdftex]{graphicx}
  \graphicspath{{figures/}}
  \DeclareGraphicsExtensions{.jpg,.jpeg,.pdf,.png}
\else
  \usepackage[dvips]{graphicx}
  \graphicspath{{figures/}}
  \DeclareGraphicsExtensions{.eps}
\fi

\usepackage{amsmath}
\usepackage{amssymb}
\usepackage{amsfonts}
\usepackage{amsthm}
\usepackage[ruled,linesnumbered]{algorithm2e}
\usepackage{algpseudocode}

\usepackage{booktabs}
\usepackage{multirow}
\usepackage{array}
\usepackage{adjustbox}
\usepackage{makecell}
\usepackage{colortbl}
\usepackage[table,xcdraw,usenames,dvipsnames]{xcolor}

\usepackage{graphicx}

\usepackage{hyperref}
\usepackage{url}
\hypersetup{
    colorlinks=true,
    linkcolor=red,
    citecolor=cyan,
    filecolor=magenta,      
    urlcolor=magenta,
}

\usepackage[utf8]{inputenc}
\usepackage[T1]{fontenc}
\usepackage{nicefrac}
\usepackage{microtype}
\usepackage{xspace}
\usepackage{enumitem}
\usepackage{pifont}
\usepackage{cleveref}
\usepackage{siunitx}
\usepackage{listings}
\usepackage{marvosym}

\newtheorem{theorem}{Theorem}

\newcommand{\method}{\textsc{SAMark}\xspace}

\definecolor{mygrey}{gray}{0.4}
\usepackage{xcolor}

\begin{document}

\title{\method: A Self-Anchored Text Watermarking with Paragraph-Level Paraphrase Robustness}

\author{
\IEEEauthorblockN{
Jiahao Huo\IEEEauthorrefmark{1}\IEEEauthorrefmark{3},
Wenjie Qu\IEEEauthorrefmark{4},
Yibo Yan\IEEEauthorrefmark{1}\IEEEauthorrefmark{2},
Kening Zheng\IEEEauthorrefmark{3},\\
Jiaheng Zhang\IEEEauthorrefmark{4},
Xuming Hu\IEEEauthorrefmark{1}\IEEEauthorrefmark{2},
Philip S. Yu\IEEEauthorrefmark{3},
Mingxun Zhou\IEEEauthorrefmark{2}\textsuperscript{\Letter}
}
\IEEEauthorblockA{\IEEEauthorrefmark{1}The Hong Kong University of Science and Technology (Guangzhou)}
\IEEEauthorblockA{\IEEEauthorrefmark{2}The Hong Kong University of Science and Technology}
\IEEEauthorblockA{\IEEEauthorrefmark{3}University of Illinois Chicago}
\IEEEauthorblockA{\IEEEauthorrefmark{4}National University of Singapore}
\IEEEauthorblockA{
Emails: jiahaohuotj@gmail.com, mingxunz@ust.hk
}
\IEEEauthorblockA{
$^{\textrm{\Letter}}$ Corresponding author.
}
}

\maketitle

\begin{abstract}
Semantic-level watermarking (SWM) improves robustness against text modifications by treating sentences as the basic unit. However, robustness to paragraph-level paraphrasing remains difficult because such attacks globally disrupt watermark signals by changing sentence order. In this work, we propose \method, a self-anchored watermarking framework that removes the dependency on sentence order by establishing a step-independent green region in semantic space. To improve detectability, we introduce a multi-channel hyperbolic scoring mechanism that amplifies watermark signals while suppressing noise from weakly aligned candidates. We further propose a diversity-aware filtering strategy that combines hard filtering with soft regularization, extending beyond simple n-gram repetition filters to address semantic redundancy. Experimental results show that \method achieves up to 90.2\% TP@FP1\% under typical paragraph-level paraphrasing attacks, outperforming the strongest prior baseline by more than 30\% on average, while maintaining generation quality competitive with unwatermarked text and breaking the robustness-quality trade-off that limits prior methods. Our code will be released at \href{https://github.com/Z1zs/SAMark}{this URL}.

\end{abstract}

\IEEEpeerreviewmaketitle

\section{Introduction} 

The rapid advancement of generative AI (GenAI) has transformed content creation across many domains, from education to software development. Yet the growing ability of these models to produce realistic and persuasive content has also raised concerns about authenticity, attribution, and misuse. LLM watermarking, which embeds imperceptible yet algorithmically detectable signals into model outputs to identify AI-generated text, has emerged as a promising way to mitigate such risks.

During generation, watermarking subtly modifies model outputs while preserving meaning and fluency, embedding a hidden, consistent signal that serves as a verifiable marker of GenAI origin. The watermark can later be detected from the generated output, enabling verification without relying on inherent statistical artifacts of the language model and offering a more resilient way to distinguish machine-generated content from human-authored work. In practical use, AI-generated text is rarely kept unchanged. Users often edit, shorten, or paraphrase generated drafts to match a target style, format, or audience. A watermarking scheme for text traceability therefore needs to preserve its signal under these routine modifications. Currently, most existing watermark methods fall into two categories, namely token-level watermarking and semantic-level watermarking.

\textbf{Token-level Watermarking.} Token-level watermarking (TWM) schemes for LLMs have been widely studied. Green-Red schemes~\cite{kgw,unigram} assign tokens to keyed green and red subsets and bias sampling toward green tokens; the detector then tests whether the generated text contains an unusually high fraction of green tokens. This mechanism provides a simple statistical signal, but it changes the original sampling distribution of the LLM and can degrade text quality~\cite{unbiased}. Distortion-free methods, including Gumbel Watermarking~\cite{exp} and PRC-based pseudorandom error-correcting schemes~\cite{prf}, reduce this distributional distortion and provide theoretical guarantees on text quality. However, their guarantees mainly address generation quality rather than edit robustness. Local modifications such as word deletion, substitution, and paraphrasing can still remove or dilute token-level evidence, which limits the reliability of TWM in adversarial or post-edited settings~\cite{semstamp}.\par
\begin{figure}[t]
\centering
\includegraphics[width=0.98\linewidth]{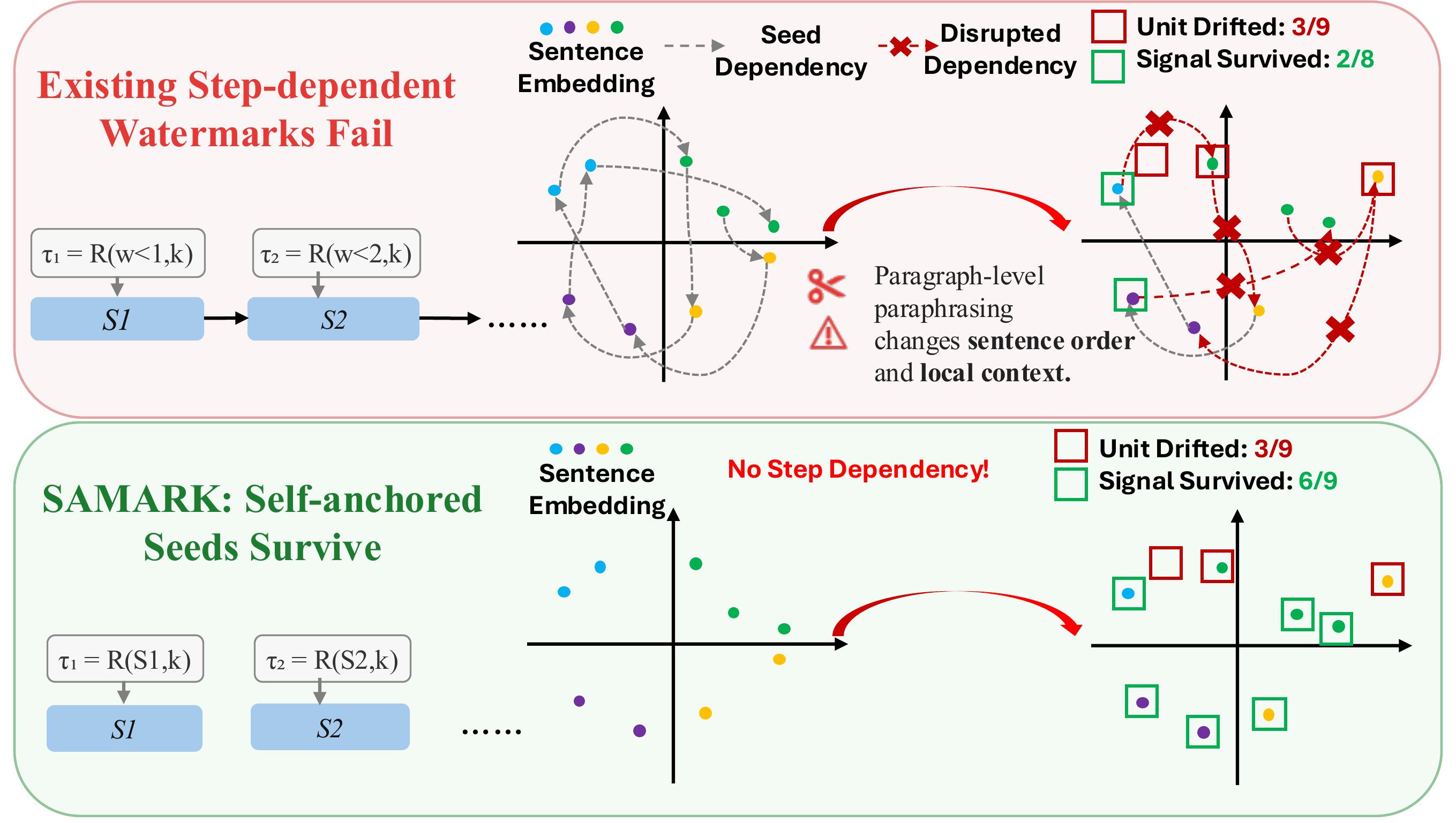}
\caption{Comparison between step-dependent watermarks and \method.}
\label{fig:comp} 
\end{figure}

\textbf{Semantic-level Watermarking.} To improve robustness to such attacks, semantic-level watermarking (SWM) approaches, such as SemStamp~\cite{semstamp,ksemstamp,simmark}, treat a semantically complete sentence as the basic watermarking unit. These methods use rejection sampling to ensure that generated sentences fall within a valid semantic region of the embedding space, analogous to the green list in Green-Red watermarking. Recently, PMark~\cite{huo2026pmark} increased watermark density in SWM by applying multi-channel constraints during sampling, further improving robustness against sentence-level paraphrasing attacks. Despite these efforts, current SWM methods still rely on context hashing or fixed private keys to partition green and red regions during sampling~\cite{semstamp,ksemstamp,huo2026pmark}. This reliance makes them vulnerable to paragraph-level paraphrasing attacks (PPA), which can globally disrupt watermark signals by changing sentence order. Achieving robustness to PPA has long been regarded as an intractable problem in the community~\cite{huo2026pmark}.

\begin{table}[t]
\centering
\caption{Qualitative comparison of representative watermarking methods across the three desired properties. \textbf{Quality} -- AvgRank $\Delta$ vs.\ unwatermarked: \ding{72}\ding{72}\ding{72} ($\Delta\!\le\!-0.5$), \ding{72}\ding{72} ($|\Delta|\!\le\!0.5$), \ding{72} ($\Delta\!>\!0.5$). \textbf{Robustness} -- TP@FP1\% under PPA: \ding{72}\ding{72}\ding{72} ($\ge\!70\%$), \ding{72}\ding{72} ($30\!-\!70\%$), \ding{72} ($<\!30\%$). \textbf{Efficiency} -- sampled tokens per output token relative to KGW; lower is better.}
\label{tab:expected_property}
\begin{tabular}{l|ccc}
\toprule
\textbf{Method} & \textbf{Text Quality} & \textbf{Robustness} & \textbf{Token Cost} \\
\midrule
KGW~\cite{kgw}             & \ding{72}                    & \ding{72}\ding{72}           &  1x \\
SynthID~\cite{synthid}     & \ding{72}\ding{72}\ding{72}  & \ding{72}                    & 1x \\
SemStamp~\cite{semstamp}   & \ding{72}\ding{72}           & \ding{72}\ding{72}           &  64x                  \\
PMark~\cite{huo2026pmark}  & \ding{72}\ding{72}\ding{72}           & \ding{72}\ding{72}           & 100x                  \\
\rowcolor{gray!20}
\method (Ours)             & \ding{72}\ding{72}\ding{72}  & \ding{72}\ding{72}\ding{72}  & 47x         \\
\bottomrule
\end{tabular}
\end{table}
The limitations of TWM and SWM discussed above reflect a broader set of practical requirements that the community has gradually converged on.  Specifically, several desirable properties have emerged. First, the detector for a watermarking scheme should accurately distinguish LLM-generated and human-written content, meaning that it should be unlikely to flag content produced independently of the watermarking keys as watermarked (usually measured by TP@FP1\%). Second, the watermark should not degrade the model's generation quality. In practice, the quality of watermarked text should remain competitive with that of unwatermarked content. Furthermore, watermark detectability should remain robust under adversarial attacks, so a high TP@FP1\% is desirable even for modified or paraphrased outputs. Finally, the watermarking scheme should be efficient and incur minimal latency overhead, enabling practical deployment in LLM services. We summarize a qualitative comparison of representative watermarking methods across the three desired properties in Table~\ref{tab:expected_property}.

\textbf{Is a PPA-robust SWM method possible?} Given the status quo, we ask the following question:

\textit{Can we develop a high-quality semantic-level watermarking scheme that is inherently robust to paragraph-level paraphrasing attacks?} 

An affirmative answer would reshape the practical landscape of SWM. Specifically, our goal is to realize an SWM method that is robust and quality-preserving, thereby enabling practical deployment in LLM services.
\subsection{Our Contribution}

Our first insight is that the vulnerability to paragraph-level paraphrasing attacks of most existing watermarking schemes and their variants~\cite{kgw,synthid,semstamp,huo2026pmark} stems from their reliance on step-dependent random seeds. We show that the intrinsic semantic content of each unit is the most reliable invariant under such attacks, motivating a shift away from step-dependent seed generation.

\begin{figure}[t]
\centering
\includegraphics[width=0.7\linewidth]{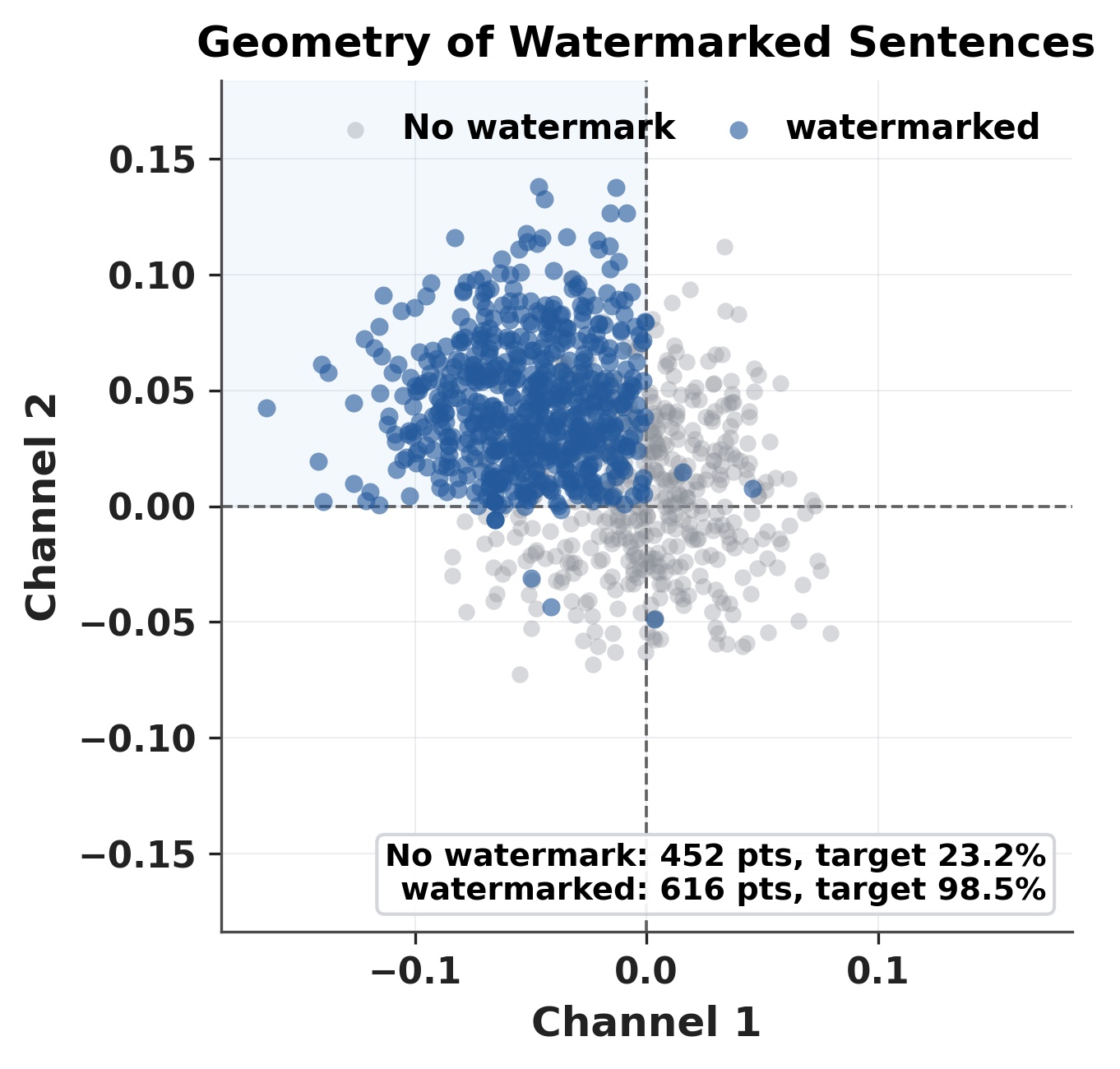}
\caption{Illustration of semantic-level watermark and unwatermarked semantic distribution.}
\label{fig:unw_geo} 
\end{figure}
Our second insight is based on the observation that concentrated watermark scores in semantic-level watermarking can degrade detectability under attacks, as shown in Figure~\ref{fig:unw_geo}. Such concentration increases the risk of sign flips under paraphrasing attacks, posing a direct threat to PPA robustness. We therefore argue that avoiding this concentration near the region boundary is essential for robustness. In practice, we incorporate a hyperbolic sampling strategy into our pipeline to achieve this goal (see Section~\ref{sec:generation}).

\textbf{Self-anchored Paradigm.} Building on these insights, our main technical contribution is \method, a novel self-anchored watermarking framework. The core of our method is to eliminate step dependence by establishing a step-independent green region within semantic space. By making the random seed depend solely on the unit's own semantics, we ensure that the watermark survives both text perturbation and sentence reordering.

\textbf{Multi-channel Hyperbolic Scoring and Quality Control.} To realize this framework, we tackle two technical challenges: signal detectability and generation quality. For signal enhancement, we introduce a multi-channel hyperbolic scoring mechanism that amplifies robust watermark signals while suppressing noise from weakly aligned candidates. To maintain text quality, we propose a diversity-aware filtering strategy that combines n-gram and semantic hard filtering with soft regularization to prevent redundant generation and encourage lexical novelty.

\textbf{Comprehensive Evaluation.} Extensive empirical evaluations demonstrate that \method achieves strong robustness. Under extreme paragraph-level paraphrasing attacks where existing baselines largely fail, our method maintains high detectability. Furthermore, our results show that \method preserves generation quality competitive with that of unwatermarked text, effectively narrowing the traditional robustness-quality trade-off. Ablation studies further confirm the necessity of each component, showing that removing key modules can lead to a 20\% drop in post-attack TP@FP1\%.

The remainder of this paper is organized as follows. Section~\ref{sec:rw} reviews related work on token-level and semantic-level watermarking. Section~\ref{sec:pre} formalizes the problem and analyzes the limitations of current approaches. Section~\ref{sec:method} details the proposed \method framework and its key components. Section~\ref{sec:exp} presents the experimental setup, main results, and comprehensive evaluations. Finally, Section~\ref{sec:dis} discusses possible limitations and future directions, while Section~\ref{sec:con} concludes the paper.

\section{Problem Definition and Preliminaries}\label{sec:pre}

To motivate our approach, we first formalize the general framework of existing watermarking methods and then analyze why they fail under paragraph-level paraphrasing attacks (PPA). This analysis reveals that step-dependent random seeds are the root cause of vulnerability, guiding us toward a self-anchored solution.

\subsection{Problem Setup and General Framework}\label{sec:gen}
\textbf{Problem Setup.} Let $\mathcal{V}$ denote the vocabulary of an LLM $\pi$. The text generation process is denoted by $x \xrightarrow{\pi} y$, where a response $y$ is sampled from the model given an input prompt $x$. We define $\mathcal{V}^*$ as the set of all semantically complete sentences, including the null sentence of length zero. Let $u \in \Omega$ denote the basic unit of watermark generation, where $\Omega$ represents the set of all possible units, such as a token $v \in \mathcal{V}$ in a token-level watermarking (TWM) scheme or a sentence $s \in \mathcal{V}^*$ in a semantic-level watermarking (SWM) scheme. At each generation step $t$, the probability distribution of producing the next unit $u_t \in \Omega$ given the preceding context $u_{<t} = (u_1, \dots, u_{t-1})$ is denoted by $p_\pi(\cdot \mid u_{<t})$.

\textbf{General Framework.} Existing generative watermarking schemes typically consist of three key components: a random seed generator $\mathfrak R$, a sampling algorithm $\mathfrak G$, and a scoring function $\mathfrak S$. We introduce a private key $k$ drawn from a key space $\mathcal{K}$. The watermarked output follows a modified conditional distribution $p_{\pi}^w(u_t \mid u_{<t}; k)$. The unified generation and detection process can be formalized as:
\begin{itemize}
    \item \textbf{Fetching the Random Seed ($\mathfrak R$).} At step $t$, a random seed $r_t$ is generated. For context-hashing-based watermarks, $r_t = \mathfrak R(u_{t-w}, \dots, u_{t-1}; k)$, where $w$ denotes the context window size. For fixed-key watermarks, the seed is step-dependent but context-independent: $r_t = \mathfrak R(t; k)$.
    \item \textbf{Output Sampling ($\mathfrak G$).} The seed $r_t$ defines a green region partition $G_t = \mathcal{P}(\Omega; r_t)$, and the output $u_t$ is preferentially sampled from $G_t$ via $u_t = \mathfrak G(\pi, u_{<t}, G_t) \sim p_{\pi}^w(\cdot \mid u_{<t}; k)$.
    \item \textbf{Detection ($\mathfrak S$).} The detector quantifies the correlation between generated units and reconstructed seeds. The overall detection score is $\mathfrak S(X) = \sum_{t=1}^T \mathfrak S(u_t, r_t)$.
\end{itemize}
Robustness is typically pursued by maximizing $\mathfrak S(X)$ during generation while maintaining text quality through a carefully designed sampling algorithm $\mathfrak G$.

\subsection{Brief Introduction to Existing Methods and Their Limitations}
\textbf{Token- and Semantic-level Green-Red Paradigm.} The Green-Red paradigm was initially introduced for token-level watermarking~\cite{kgw}. At generation step $t$, a seed $r_t=\mathfrak R(u_{t-w},\dots,u_{t-1};k)$ is used to partition the vocabulary $\mathcal V$ into a green list $G_t$ and a red list $R_t=\mathcal V\setminus G_t$, with $|G_t|=\gamma|\mathcal V|$. Given the model logits $\ell_v^{(t)}$ for each token $v\in\mathcal V$, TWM increases the probability of green tokens by adding a positive logit bias $\delta$:
\[
\begin{aligned}
p_\pi^w(v \mid u_{<t};k)
&=
\begin{cases}
\dfrac{\exp(\ell_v^{(t)}+\delta)}{\Lambda}, & v\in G_t, \\
\dfrac{\exp(\ell_v^{(t)})}{\Lambda}, & v\in R_t,
\end{cases}
\\
\text{where } \Lambda
&=
\sum_{v'\in G_t}\exp(\ell_{v'}^{(t)}+\delta)
+
\sum_{v'\in R_t}\exp(\ell_{v'}^{(t)}).
\end{aligned}
\]
Semantic-level watermarking follows the same partitioning principle, but the partition is defined over the unit space $\Omega=\mathcal V^*$ rather than the token vocabulary. Given a context-dependent seed $r_t=\mathfrak R(u_{t-w},\dots,u_{t-1};k)$, the semantic space is partitioned into $G_t=\mathcal P(\Omega;r_t)$ and $R_t=\Omega\setminus G_t$, often through locality-sensitive hashing over sentence embeddings. SWM then performs rejection sampling: a candidate sentence $u$ is accepted only if $u\in G_t$. Equivalently, the watermarked distribution is the model distribution truncated to the green region:
\[
\begin{aligned}
p_\pi^w(u \mid u_{<t};k)
&=
\begin{cases}
\dfrac{p_\pi(u \mid u_{<t})}{Z_t}, & u\in G_t, \\
0, & u\in R_t,
\end{cases}
\\
\text{where } Z_t
&=
\sum_{u'\in G_t}p_\pi(u'\mid u_{<t}).
\end{aligned}
\]

We identify two fundamental vulnerabilities in existing watermarking schemes that stem from their reliance on step-dependent random seeds.

\textit{Context-Dependent Seeds are Fragile.} Most existing TWMs~\cite{kgw, synthid} and SWMs~\cite{semstamp, ksemstamp} rely on context-hashing mechanisms, where the random seed depends strictly on the preceding context: $r_t = \mathfrak R(u_{t-h}, \dots, u_{t-1}; k)$. Paraphrasing attacks involving synonym substitutions and unit modifications perturb the context sequence, thereby altering the reconstructed seeds. The detection score changes to $\mathfrak S(X') = \sum_t \mathfrak S(u'_t, \mathfrak R(u'_{t-h}, \dots, u'_{t-1}; k))$, shifting from the watermarked distribution $\mathcal{D}_w$ back toward the unwatermarked distribution $\mathcal{D}_{n}$ and causing detection failure.

\textit{Step-Dependent Seeds Cannot Survive Reordering.} Some watermarking schemes~\cite{huo2026pmark} use step-dependent fixed private keys $r_t = \mathfrak R(t; k)$ to avoid context dependency. However, they remain vulnerable to unit reordering. When paraphrasing attacks alter the sequence order, the detection score becomes a mismatched sum $\mathfrak S(X') = \sum_t \mathfrak S(u'_t, r_t)$, leading to a drop of up to 40\% in TP@FP1\%.

\textbf{Semantics as the Only Invariant.} From the above analysis, we conclude that \textit{the most reliable invariant property of watermarked text under PPA is the intrinsic semantic content of each unit}, denoted by $\phi(u_t)$. This semantic representation is preserved even when the surface form or step position changes. This insight motivates our self-anchored paradigm: by making the random seed depend solely on the unit's own semantics, i.e., $r_t = \mathfrak R(\phi(u_t); k)$, we can achieve robustness against both perturbation and reordering.

\subsection{The Robustness-Quality Trade-off}
\textit{Pursuing Robustness Degrades Quality.} While existing methods strive to enhance robustness, they often introduce quality degradation~\cite{watermax}. Watermark generators tend to repeat previous n-grams and sentences because of the strong constraints imposed by limited green sets. Although n-gram repetition filters~\cite{markllm} have been used to avoid looped generation, their effectiveness is limited when detecting semantic-level overlap. This trade-off constrains the practical potential of robust watermarking schemes.

\textbf{Smarter Filtering for Better Robustness-Quality Trade-off.} We observe that simple surface-level filters fail to capture semantic redundancy. To address this, we propose a diversity-aware filtering mechanism that operates at both lexical and semantic levels, enabling the selection of candidates that are simultaneously high-scoring (robust) and diverse (high-quality). This pushes the frontier of the robustness-quality trade-off.

\begin{figure*}[t]
\centering
\includegraphics[width=0.98\linewidth]{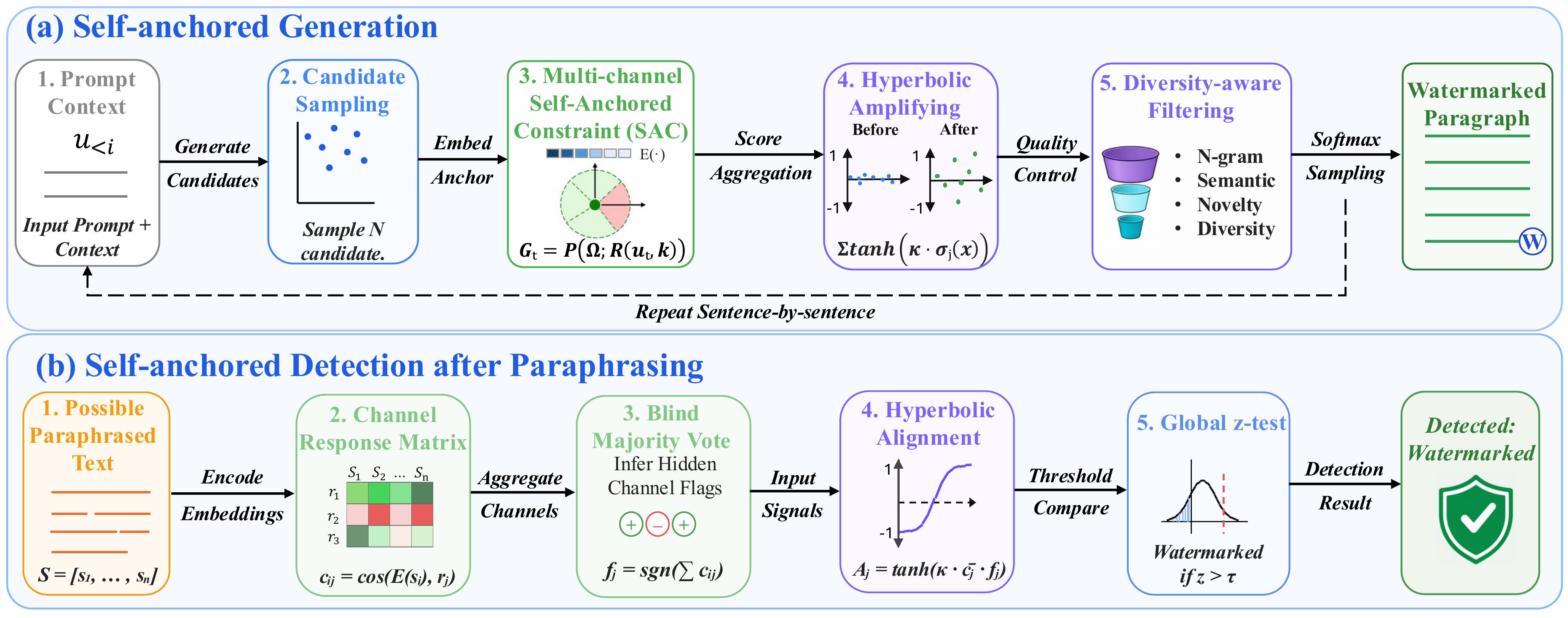}
\caption{Overall framework of \method.}
\label{fig:frame} 
\end{figure*}

\section{SAMark: Self-anchored Watermark}\label{sec:method}
Building on the insights from Section~\ref{sec:pre}, we propose \method, a self-anchored watermarking framework that achieves robustness against paragraph-level paraphrasing by making each sentence's watermark signal depend solely on its own semantics. Our method combines three key components: (1) a theoretically grounded self-anchored constraint that establishes a step-independent green region, (2) a multi-channel hyperbolic scoring mechanism that amplifies watermark signals for improved detectability, and (3) a diversity-aware filtering strategy that maintains text quality without sacrificing robustness. Figure~\ref{fig:frame} illustrates the overall framework of our proposed method.

\subsection{Rethinking Sampling Strategy with Self-Anchored Constraints}
In this section, we formulate the constraints for a self-anchored watermark by replacing the step-dependent components in the general framework with step-independent counterparts. We define the Self-Anchored Condition (SAC) as:
\begin{equation}\label{eq:sac}
\text{SAC:} \quad
\begin{cases}
r_t = \mathfrak R(u_t; k) \\
G_t = \mathcal{P}(\Omega; r_t) \\
u_t = \mathfrak G(\pi, u_{<t}, G_t) \sim p_{\pi}^w(\cdot \mid u_{<t}, k) \in G_t
\end{cases}
\end{equation}

\begin{theorem}[Step-independent Feasible Region]\label{thm:sa_region}
There exists a step-independent feasible region $\mathcal{G} \subseteq \Omega$ such that any unit $u \in \mathcal{G}$ strictly satisfies Eq.~\ref{eq:sac} at any generation step $t$, while any $v \notin \mathcal{G}$ consistently violates it.
\end{theorem}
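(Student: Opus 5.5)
The plan is to construct $\mathcal{G}$ explicitly as a fixed-point set and show that it is determined by $k$ alone. Since $r_t=\mathfrak R(u_t;k)$ depends only on the unit itself and the key, we define the map $g_k:\Omega\to 2^\Omega$ by $g_k(u)=\mathcal{P}(\Omega;\mathfrak R(u;k))$. We then set
\[
\mathcal{G}=\{u\in\Omega : u\in g_k(u)\},
\]
the set of units that lie inside the green region their own seed generates. The first two lines of Eq.~\ref{eq:sac} fix $G_t=g_k(u_t)$. The third line requires $u_t\in G_t$, which is exactly the condition $u_t\in g_k(u_t)$. Neither $g_k$ nor this membership test involves $t$ or $u_{<t}$, so $\mathcal{G}$ is step-independent.

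The argument then proceeds in two directions. For sufficiency, take any $u\in\mathcal{G}$ and any step $t$ with any context $u_{<t}$. Setting $u_t=u$ gives $r_t=\mathfrak R(u;k)$ and $G_t=g_k(u)\ni u$, so all three conditions hold. We also note that the sampler $\mathfrak G$ can realize $u$ through rejection sampling restricted to $\mathcal{G}$: the distribution $p_\pi^w(u\mid u_{<t};k)\propto p_\pi(u\mid u_{<t})\mathbf 1[u\in\mathcal{G}]$ is supported in $\mathcal{G}$. For necessity, take $v\notin\mathcal{G}$. Then $v\notin g_k(v)$ whatever the step or context, so the membership requirement fails at every $t$. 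Together these give the claimed dichotomy.

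The main obstacle is non-triviality and well-posedness: we need $\mathcal{G}$ to be nonempty and, ideally, of controlled density, and we need the third line of SAC to be read as a membership constraint rather than a circular definition. To handle this, we instantiate $\mathcal{P}$ as an embedding-based partition, for example $G=\{u:\langle \phi(u), v_{r}\rangle>0\}$ with $v_r$ a key-derived direction. We then show that for a pseudorandom $\mathfrak R$, each $u$ lands in $g_k(u)$ with probability $\gamma$ over $k$. This yields $\mathbb{E}_k[\text{mass of }\mathcal{G}]=\gamma$, so $\mathcal{G}$ is nonempty with high probability.

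If a fully explicit construction is preferred, we would simplify further by letting $\mathcal{P}(\Omega;r)$ depend on $u$ only through $r$ and choosing $\mathcal{G}=\{u:f_k(\phi(u))=1\}$ for a keyed indicator $f_k$ on semantic space. This makes the fixed-point set coincide with a keyed semantic green region, which is nonempty by construction.
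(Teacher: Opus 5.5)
Your construction $\mathcal{G}=\{u\in\Omega : u\in\mathcal{P}(\Omega;\mathfrak R(u;k))\}$ and your argument in both directions match the paper's proof, so the proposal is correct and takes essentially the same route. Your extra discussion of whether $\mathcal{G}$ is nonempty and how dense it is goes beyond the theorem, and the paper does not address it. As a side remark, an expected mass of $\gamma$ does not by itself imply that $\mathcal{G}$ is nonempty with high probability; you would also need a concentration argument.
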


\begin{proof}
By the definition of the Self-Anchored Condition in Eq.~\ref{eq:sac}, the random seed $r_t = \mathfrak R(u_t; k)$ and the corresponding partition $G_t = \mathcal{P}(\Omega; r_t)$ depend exclusively on the current unit $u_t$ and the fixed key $k$. They are strictly decoupled from the generation step $t$ and the preceding context $u_{<t}$. Therefore, we can explicitly construct the step-independent feasible region $\mathcal{G}$ as:
\begin{equation*}
    \mathcal{G} = \{ u \in \Omega \mid u \in \mathcal{P}(\Omega; \mathfrak R(u; k)) \}.
\end{equation*}
For any unit $u \in \mathcal{G}$, assigning $u_t = u$ at an arbitrary step $t$ deterministically yields the seed $r_t = \mathfrak R(u; k)$ and the partition $G_t = \mathcal{P}(\Omega; r_t)$. By the definition of the set $\mathcal{G}$, it holds that $u_t \in G_t$, which strictly satisfies the constraint in Eq.~\ref{eq:sac}. Conversely, for any unit $v \notin \mathcal{G}$, it inherently follows that $v \notin \mathcal{P}(\Omega; \mathfrak R(v; k))$. Consequently, assigning $u_t = v$ invariably results in $u_t \notin G_t$, which consistently violates Eq.~\ref{eq:sac} regardless of the step $t$. This completes the proof.
\end{proof}
Theorem~\ref{thm:sa_region} demonstrates that the self-anchored conditions in Eq.~\ref{eq:sac} are mathematically equivalent to establishing a globally static green region within the entire space $\Omega$ per response. 

\subsection{\method Generator: Multi-channel Hyperbolic Sampling}\label{sec:generation}

\begin{algorithm}[t]
\small
\SetAlgoLined
\LinesNumbered
\SetKwInOut{Input}{Input}
\SetKwInOut{Output}{Output}
\caption{\method Watermark Generation}
\label{alg:sa_gen}
\Input{LLM $\pi$; context $u_{<1}$; embedding model $\mathcal{E}$; private key $k$; sentences $T$; channels $c$; budget $N$; $\kappa$; $\epsilon$; $\lambda_{\text{div}}, \lambda_{\text{nov}}$, $\theta_{\text{ngram}}, \theta_{\text{sem}}$}
\Output{Generated sentences $u_1,\dots,u_T$}
Derive pivot vectors $V=\{v_1,\dots,v_c\}$ and flags $R=\{r_1,\dots,r_c\}$ from $k$\;
$\mathcal{H} \gets \emptyset$; $\mathcal{V}_{<1} \gets \text{vocab}(u_{<1})$\;
\For{$t \leftarrow 1$ \KwTo $T$}{
  Sample $W=\{x_1,\dots,x_N\}$ with $x_i \sim p_\pi(\cdot \mid u_{<t})$\;
  $W \gets \{x \in W \mid \rho_{\text{ngram}}(x) < \theta_{\text{ngram}}\}$\;
  $W \gets \{x \in W \mid \rho_{\text{sem}}(x) < \theta_{\text{sem}}\}$\;
  $V_{\text{match}} \gets \{x \in W \mid \forall j: \operatorname{sgn}(\cos(\mathcal{E}(x),v_j))= r_j\}$\;
  \ForEach{$x \in V_{\text{match}}$}{
    $S_{\text{wm}}(x) \gets \sum_{j=1}^c \tanh(\kappa \cdot r_j \cdot \cos(\mathcal{E}(x), v_j))$\;
    $S(x) \gets S_{\text{wm}}(x) + \lambda_{\text{div}} D(x) + \lambda_{\text{nov}} \mathcal{N}(x)$\;
  }
  $q(x) \gets \exp(\epsilon \cdot S(x)) / \sum_{x'} \exp(\epsilon \cdot S(x'))$\;
  Sample $u_t \sim q(\cdot)$; $\mathcal{H} \gets \mathcal{H} \cup \{u_t\}$\;
}
\textbf{return} $u_1,\dots,u_T$\;
\end{algorithm}

We adopt the multi-channel sampling paradigm from PMark~\cite{huo2026pmark} and introduce a hyperbolic-enhanced scoring mechanism for candidate sampling. Given $c$ pivot vectors $V = \{v_1, \dots, v_c\}\in \mathbb R^d$ living in the embedding space of text encoder $\mathcal E$ and a target flag pattern $R = \{r_1, \dots, r_c\} \in \{-1, 1\}^c$, the step-independent green region is defined as $\{u \in \Omega \mid \text{sgn}(\langle \mathcal E(u), v_j \rangle) = r_j, \forall j \in \{1, \dots, c\}\}$, with $\langle \cdot,\cdot \rangle$ denoting the inner product and $\text{sgn}(\cdot)$ denoting the signum function. To add feasibility, the target flag pattern $R$ is chosen uniformly at random for each query on the generation side and is not shared with the detection side, reducing storage overhead. The generation process proceeds as follows:

\begin{itemize}
    \item \textbf{Candidate Generation:} Given context $u_{<t}$ and model $\pi$, sample $N$ candidate sentences from the natural model distribution to form the pool $W = \{x_1, \dots, x_N\}$, where $x_i \sim p_\pi(\cdot \mid u_{<t})$.
    
    \item \textbf{Embedding and Channel Matching:} Extract semantic embeddings $\mathcal{E}(x)$ for all $x \in W$. Filter $W$ to retain only candidates whose similarity signs strictly match the target pattern $R$ across all $c$ channels, yielding the matched subset:
    \begin{equation}
        V_{\text{match}} = \{ x \mid \text{sgn}(\langle \mathcal{E}(x), v_j \rangle) = r_j, \forall j \in \{1, \dots, c\} \}.
    \end{equation}
    
    \item \textbf{Hyperbolic Scoring:} For each $x \in V_{\text{match}}$, compute the signed cosine similarity per channel $\sigma_j(x) = r_j \cdot \cos(\mathcal{E}(x), v_j)$. We apply a hyperbolic tangent transform to amplify candidates with strong channel alignment while suppressing near-zero noise:
    \begin{equation}\label{eq:tanh_score}
        S_{\text{wm}}(x) = \sum_{j=1}^c \tanh(\kappa \cdot \sigma_j(x)),
    \end{equation}
    where $\kappa > 0$ is a sharpness parameter controlling the amplification strength (default $\kappa = 30$). The $\tanh$ function saturates at $\pm 1$, ensuring that candidates with large $|\sigma_j(x)|$ contribute maximally while those with weak alignment are suppressed. The final score incorporates diversity regularization (detailed in Section~\ref{sec:diversity}):
    \begin{equation}\label{eq:full_score}
        S(x) = S_{\text{wm}}(x) + \lambda_{\text{div}} \cdot D(x) + \lambda_{\text{nov}} \cdot \mathcal{N}(x),
    \end{equation}
    where $D(x)$ and $\mathcal{N}(x)$ are the semantic diversity and vocabulary novelty terms, respectively.
    
    \item \textbf{Softmax Selection:} Rather than deterministic top-$k$ selection~\cite{watermax}, we employ a softmax distribution over the matched candidates to introduce controlled stochasticity:
    \begin{equation}\label{eq:softmax_select}
        q(x) = \frac{\exp(\epsilon \cdot S(x))}{\sum_{x' \in V_{\text{match}}} \exp(\epsilon \cdot S(x'))},
    \end{equation}
    where $\epsilon > 0$ is a temperature parameter. The watermarked sentence $u_t$ is then sampled according to $u_t \sim q(\cdot)$.
\end{itemize}
\begin{figure}[t]
\centering
\includegraphics[width=0.98\linewidth]{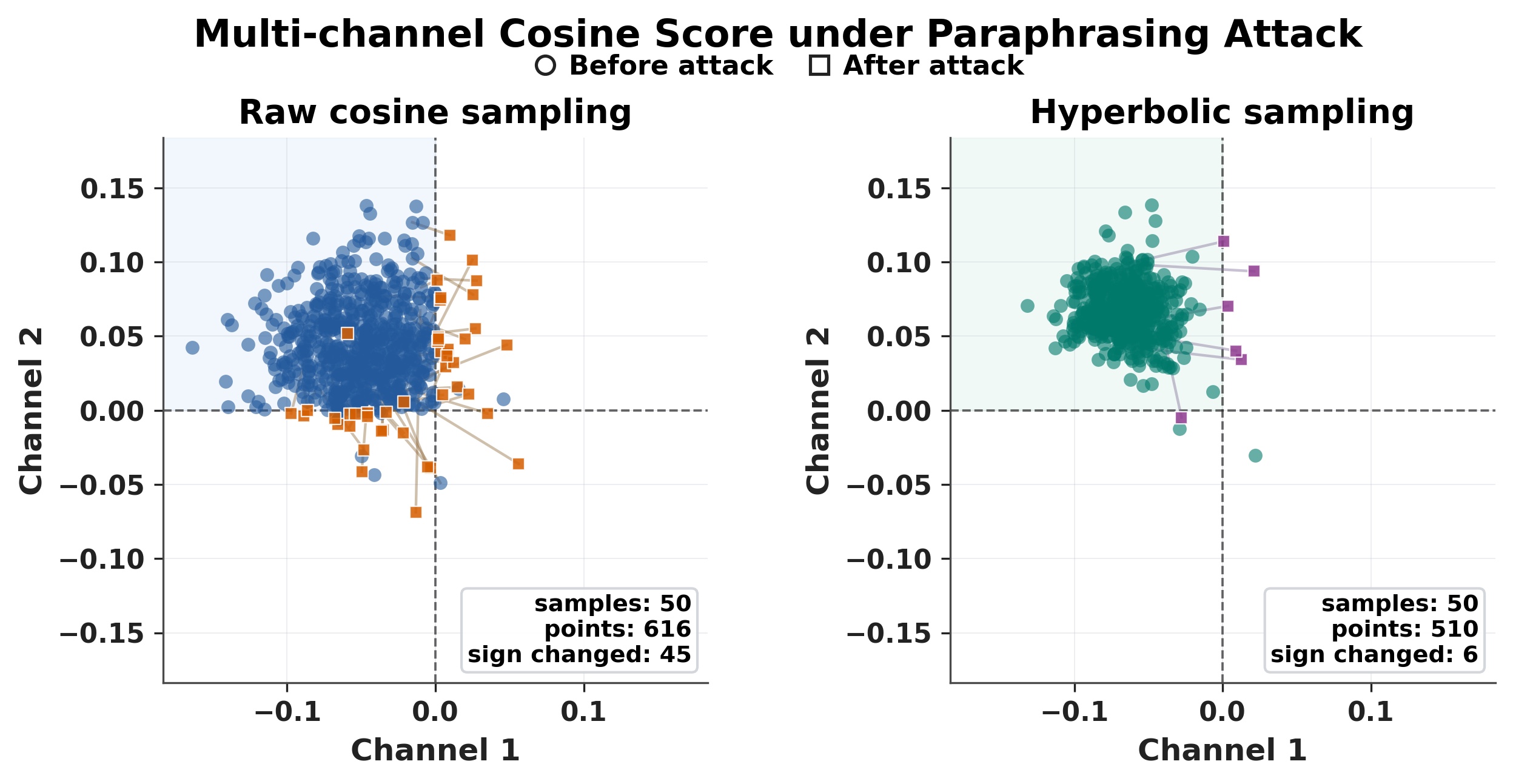}
\caption{Multi-channel semantic deviation under paraphrasing attacks.}
\label{fig:hyper_geo} 
\end{figure}
In Figure~\ref{fig:hyper_geo}, we display the influence of hyperbolic scoring under the Doc-P I attack. Compared with naive cosine scoring in previous SWM methods, hyperbolic scoring prevents the concentration on the boundary of the green regions.
\subsection{\method Detector: Majority-Vote Blind Detection}

\begin{algorithm}[t]
\small
\SetAlgoLined
\LinesNumbered
\SetKwInOut{Input}{Input}
\SetKwInOut{Output}{Output}
\caption{\method Watermark Detection}
\label{alg:sa_detect}
\Input{Text $X$; embedding model $\mathcal{E}$; private key $k$; channels $c$; sharpness $\kappa$; threshold $\tau$}
\Output{\texttt{True} (watermarked) or \texttt{False}}
Segment $X$ into sentences $S=[s_1,\dots,s_n]$\;
Derive pivot vectors $V=\{v_1,\dots,v_c\}$ from $k$\;
\For{$i \leftarrow 1$ \KwTo $n$}{
  \For{$j \leftarrow 1$ \KwTo $c$}{
    $C_{i,j} \gets \cos(\mathcal{E}(s_i), v_j)$\;
  }
}
\For{$j \leftarrow 1$ \KwTo $c$}{
  $\hat{r}_j \gets \operatorname{sgn}\!\left(\sum_{i=1}^n C_{i,j}\right)$\;
}
\For{$i \leftarrow 1$ \KwTo $n$}{
  \For{$j \leftarrow 1$ \KwTo $c$}{
    $A_{i,j} \gets \tanh(\kappa \cdot C_{i,j} \cdot \hat{r}_j)$\;
  }
}
$\bar{A} \gets \frac{1}{nc}\sum_{i,j} A_{i,j}$; $\hat{\sigma}_A \gets \sqrt{\frac{1}{nc}\sum_{i,j} (A_{i,j}-\bar{A})^2}$\;
$z \gets \bar{A} \cdot \sqrt{nc} \,/\, \hat{\sigma}_A$\;
\textbf{return} $(z > \tau)$\;
\end{algorithm}

We introduce a blind, sentence-level detection scheme that extracts watermark evidence across multiple channels without prior knowledge of the target flags $R$. The detection process mirrors the generation-side hyperbolic transform for consistency.

\begin{itemize}
    \item \textbf{Sentence Segmentation:} Segment the input text into a sequence of sentences $S = [s_1, \dots, s_n]$.
    
    \item \textbf{Embedding and Channel Response:} Extract embeddings $\mathcal{E}(s_i)$ and deterministically reconstruct the pivot vectors $V = \{v_1, \dots, v_c\}$ using the shared key. Compute the channel response matrix $C \in \mathbb{R}^{n \times c}$:
    \begin{equation}
        C_{i,j} = \cos(\mathcal{E}(s_i), v_j).
    \end{equation}
    
    \item \textbf{Blind Flag Inference:} Since the detector has no access to the true flags $R$, we infer them via majority voting over the directional responses. For each channel $j$, aggregate the responses across all sentences to estimate the flag as $\hat{r}_j = \text{sgn}\left( \sum_{i=1}^n C_{i,j} \right)$.
    Under the watermarking hypothesis, the majority of sentences will have been selected to align with the true flag pattern, making this inference reliable.
    
    \item \textbf{Score Alignment with Hyperbolic Transform:} Align the raw channel responses using the inferred flags and apply the same hyperbolic transform used during generation:
    \begin{equation}
        A_{i,j} = \tanh(\kappa \cdot C_{i,j} \cdot \hat{r}_j),
    \end{equation}
   where $A_{i,j}$ measures the evidence strength of the $i$-th sentence in the $j$-th channel. Similarly, this transform amplifies strong watermark signals while suppressing noise from weakly aligned sentences.
    
    \item \textbf{Robust Detection via $z$-Test:} Treat the transformed alignment matrix $A$ as a flattened distribution of $n \cdot c$ samples. The global $z$-statistic is computed as:
    \begin{align}
        z = \frac{\bar{A}}{\hat{\sigma}_A / \sqrt{nc}}, \notag 
        \text{where} \quad \bar{A} = \frac{1}{nc}\sum_{i,j} A_{i,j},\\ \quad \hat{\sigma}_A = \sqrt{\frac{1}{nc}\sum_{i,j} (A_{i,j} - \bar{A})^2}.
    \end{align}
    The text is classified as watermarked if $z$ exceeds a predefined confidence threshold $\tau$.
\end{itemize}

\subsection{Diversity-Aware Generation}\label{sec:diversity}
While the watermark scoring mechanism in Eq.~\ref{eq:tanh_score} ensures strong detectability, greedily selecting high-scoring candidates may lead to repetitive or semantically monotonous text~\cite{watermax,markllm}. To maintain generation quality, we introduce a two-stage diversity control mechanism: a hard filtering stage that removes overly similar candidates, followed by soft regularization terms that encourage diverse selections. Such additional optimization ensures the diversity and quality of generated text, avoiding quality degradation resulting from the strong sampling constraints.

\subsubsection{Hard Filtering}
Before scoring, we apply two sequential filters to the candidate pool $W$:

\begin{itemize}
    \item \textbf{N-gram Overlap Filter:} For each candidate $x$, we compute the fraction of its $n$-grams (default $n=4$) that already appear in the generated context $u_{<t}$. Let $\mathcal{G}_n(x)$ denote the set of $n$-grams in $x$ and $\mathcal{G}_n(u_{<t})$ denote those in the context. The overlap ratio is:
    \begin{equation}
        \rho_{\text{ngram}}(x) = \frac{|\mathcal{G}_n(x) \cap \mathcal{G}_n(u_{<t})|}{|\mathcal{G}_n(x)|}.
    \end{equation}
    Candidates with $\rho_{\text{ngram}}(x) \geq \theta_{\text{ngram}}$ are filtered out, where $\theta_{\text{ngram}} \in (0, 1]$ is the overlap threshold.
    
    \item \textbf{Semantic Similarity Filter:} For each remaining candidate, we compute its maximum cosine similarity to any previously selected sentence. Let $\mathcal{H} = \{u_1, \dots, u_{t-1}\}$ be the set of previously generated sentences with embeddings $\{\mathcal{E}(u_i)\}$. The semantic similarity is:
    \begin{equation}
        \rho_{\text{sem}}(x) = \max_{u \in \mathcal{H}} \cos(\mathcal{E}(x), \mathcal{E}(u)).
    \end{equation}
    Candidates with $\rho_{\text{sem}}(x) \geq \theta_{\text{sem}}$ are filtered out, where $\theta_{\text{sem}} \in (0, 1]$ is the similarity threshold.
\end{itemize}

\subsubsection{Soft Regularization}
For candidates passing the hard filters, we augment the watermark score with two diversity-promoting terms:

\begin{itemize}
    \item \textbf{Semantic Diversity Bonus:} We reward candidates that are semantically dissimilar to previously selected sentences:
    \begin{equation}
        D(x) = 1 - \rho_{\text{sem}}(x) = 1 - \max_{u \in \mathcal{H}} \cos(\mathcal{E}(x), \mathcal{E}(u)).
    \end{equation}
    This term ranges in $[0, 1]$, with higher values indicating greater semantic novelty.
    
    \item \textbf{Vocabulary Novelty Bonus:} We encourage candidates that introduce new vocabulary. Let $\mathcal{V}(x)$ denote the set of unique words in $x$ and $\mathcal{V}_{<t}$ denote the accumulated vocabulary from the context. The novelty score is:
    \begin{equation}
        \mathcal{N}(x) = \frac{|\mathcal{V}(x) \setminus \mathcal{V}_{<t}|}{|\mathcal{V}(x)|},
    \end{equation}
    representing the fraction of words in $x$ that have not appeared in the generated text so far.
\end{itemize}

The final selection score in Eq.~\ref{eq:full_score} balances watermark strength with diversity through the weights $\lambda_{\text{div}}$ and $\lambda_{\text{nov}}$. Setting these weights to zero recovers the pure watermark-driven selection, while positive values encourage more varied and natural-sounding text.

\section{Experimental Results}\label{sec:exp}
In this section, we conduct extensive experiments to answer the following research questions:
(\textbf{RQ1}) How does \method perform under various PPAs?
(\textbf{RQ2}) Can \method preserve generation quality in practice?
(\textbf{RQ3}) What is the computational cost of \method compared with other SWM baselines?
(\textbf{RQ4}) How sensitive is \method to its key components or parameters?

\subsection{Experiment Setup}
\textbf{Dataset and Baselines.} Following previous work~\cite{semstamp,huo2026pmark}, we evaluate \method on 500 samples from C4~\cite{c4} and BOOKSUM~\cite{booksum}, using Llama-3.1-8B~\cite{llama3} and Mistral-Small-3.1-24B-Base-2503~\cite{mistral2025small31} as backbone models. Our baselines follow the official MarkLLM implementation~\cite{markllm}, including token-level methods KGW~\cite{kgw}, MorphMark~\cite{morphmark}, EXP~\cite{exp}, and SynthID~\cite{synthid}, as well as semantic-level baselines SemStamp~\cite{semstamp}, k-SemStamp~\cite{ksemstamp}, and PMark~\cite{huo2026pmark}. For \method, we set the number of channels to $2$ and the sample budget to $N=64$. We report two variants: \textbf{\method}, which uses the full diversity-aware sampling and soft-regularization pipeline, and \textbf{\method$^\dagger$}, which disables these optimization terms. More implementation details are provided in Appendix~\ref{app:setup}.

\textbf{Metrics.} Following previous work, we assess watermark effectiveness in terms of detectability through \textbf{TP@FP1\%}, \textbf{TP@FP5\%}, \textbf{AUC}, and robustness under paragraph-level attacks, such as Paraphrase Attack \textbf{Doc-P} and Back-Translation Attack \textbf{Doc-T} by GPT-4.1-mini~\cite{openai2022chatgpt}. We use two paraphrasing prompts: \textit{Doc-P I}, ``Please rewrite the following text:'' and \textit{Doc-P II}, ``Please rewrite the following text, avoiding using the same words or phrases in the original text:''. We regard \textit{Doc-P II} as the most representative and challenging scenario in practice. Appendix~\ref{app:attack} provides more details.

Although perplexity (PPL) is widely used in prior work, it is insufficient for our setting: low-quality outputs with repeated sentences can receive artificially low PPL because repetitive patterns are easy for language models to predict. We therefore include three diversity-oriented reference metrics: \textbf{Sentence Duplicate percentage (SD)}, which is lower when generation is less redundant; \textbf{Distinct-2 (D-2)}, which is higher when lexical variety is greater; and \textbf{4-gram Repeat percentage (4g)}, which is lower when text is less repetitive. To evaluate text quality fairly, we use GPT-4.1-mini to conduct pairwise blind review between each watermarking method and the unwatermarked baseline for each sample, producing \textbf{Win (PS-W)}, \textbf{Lose (PS-L)}, and \textbf{Tie (PS-T)}. Exact computation details are given in Appendix~\ref{app:quality_metrics}.

\subsection{Detectability, Robustness and Text Quality (\textbf{RQ1 \& RQ2})}\label{sec:main_res}
To answer \textbf{RQ1} and \textbf{RQ2}, we comprehensively compare \method with seven widely used token-level watermarks and three semantic-level watermarks under various attacks. Tables \ref{tab:booksum_main} and \ref{tab:c4_main} present results on BOOKSUM and C4 datasets respectively. Full results with additional metrics are provided in Appendix~\ref{app:full_results}.

\begin{table}[t]
    \centering
    \caption{Results on \textbf{BOOKSUM}. We report TP@FP1\%, pairwise LLM Judge W/L/T\% and AvgRank ($\Delta$ vs.\ unwatermarked). \textbf{Bold} = best, \underline{underlined} = second-best. $^\dagger$ denotes \method without quality optimization.}
    \setlength{\tabcolsep}{1pt}
    \renewcommand{\arraystretch}{0.95}
    \resizebox{\columnwidth}{!}{
    \begin{tabular}{l|cccc|ccc|c}
    \toprule
    \multirow{2}{*}{\textbf{Method}}
    & \multicolumn{4}{c|}{\textit{TP@FP1\%}}
    & \multicolumn{3}{c|}{\textit{Quality}}
    & \textit{Rank} \\
    \cline{2-9}
    & \textbf{No Atk} & \textbf{Doc-P I} & \textbf{\textit{Doc-P II}} & \textbf{Doc-T}
    & \textbf{PS-W} & \textbf{PS-L} & \textbf{PS-T}
    & \textbf{Avg}~($\Delta$) \\
    \midrule
    \multicolumn{9}{c}{\textit{Mistral-Small-3.1-24B-Base-2503}} \\
    \midrule
    None & -- & -- & -- & -- & -- & -- & -- & 5.76~(+0.00) \\
    EXP & 99.8 & 38.2 & 17.2 & 29.6 & 52.4 & 46.2 & 1.4 & 5.56~(-0.20) \\
    KGW & 100.0 & \underline{64.0} & 34.4 & 54.4 & 38.0 & 50.2 & 11.8 & 6.65~(+0.89) \\
    UPV & 99.0 & 61.4 & 23.6 & 38.0 & 47.2 & 51.4 & 1.4 & 6.12~(+0.35) \\
    SynthID & 99.8 & 23.8 & 10.4 & 12.6 & 54.2 & 38.6 & 7.2 & 5.11~(-0.65) \\
    MorphMark & 99.4 & 56.0 & 34.2 & 45.8 & 47.6 & 51.4 & 1.0 & 6.08~(+0.32) \\
    SemStamp & 96.0 & 49.8 & 39.8 & 58.6 & 53.2 & 39.2 & 7.6 & 5.04~(-0.72) \\
    $k$-SemStamp & 98.6 & 61.7 & 43.1 & 65.9 & 51.4 & 47.2 & 1.4 & 4.94~(-0.82) \\
    PMark & 98.0 & 51.5 & \underline{57.5} & \underline{77.0} & 57.2 & 37.6 & 5.2 & 4.61~(-1.15) \\
    \rowcolor{gray!30}
    \method & 98.0 & \textbf{90.5} & \textbf{88.1} & \textbf{87.1} & 47.9 & 45.2 & 6.9 & 5.14~(-0.62) \\
    \rowcolor{gray!15}
    \method$^\dagger$ & 98.7 & 88.7 & 89.0 & 90.3 & 33.0 & 59.0 & 8.0 & {--} \\
    \midrule
    \multicolumn{9}{c}{\textit{Nemotron-Nano-9B-v2-Base}} \\
    \midrule
    None & -- & -- & -- & -- & -- & -- & -- & 5.06~(+0.00) \\
    EXP & 99.8 & 42.2 & 16.2 & 37.8 & 30.6 & 45.8 & 23.6 & 5.81~(+0.75) \\
    KGW & 100.0 & \underline{76.8} & 45.6 & \underline{66.8} & 32.0 & 42.4 & 25.6 & 5.97~(+0.91) \\
    UPV & 99.4 & 71.0 & \underline{48.8} & 61.6 & 39.2 & 38.2 & 22.6 & 5.42~(+0.36) \\
    SynthID & 99.8 & 31.4 & 15.6 & 21.2 & 47.8 & 33.4 & 18.8 & 4.38~(-0.67) \\
    MorphMark & 100.0 & 74.0 & 46.2 & 64.0 & 36.4 & 40.4 & 23.2 & 5.57~(+0.51) \\
    SemStamp & 97.0 & 46.9 & 34.5 & 64.3 & 39.2 & 41.6 & 19.2 & 5.24~(+0.18) \\
    $k$-SemStamp & 97.6 & 51.7 & 37.1 & 63.7 & 35.0 & 45.0 & 20.0 & 5.58~(+0.52) \\
    PMark & 96.6 & 24.2 & 28.3 & 46.3 & 26.0 & 49.8 & 24.2 & 6.22~(+1.16) \\
    \rowcolor{gray!30}
    \method & 98.8 & \textbf{92.6} & \textbf{90.2} & \textbf{88.2} & 34.2 & 46.0 & 19.8 & 5.75~(+0.69) \\
    \rowcolor{gray!15}
    \method$^\dagger$ & 99.4 & 86.6 & 84.8 & 91.2 & 20.4 & 56.4 & 23.2 & {--} \\
    \bottomrule
    \end{tabular}
    }
    \label{tab:booksum_main}
    \end{table}
    \begin{table}[t]
    \centering
    \caption{Results on \textbf{C4}. We report TP@FP1\% for detection, pairwise LLM Judge W/L/T\% and AvgRank for text quality. \textbf{Bold} = best, \underline{underlined} = second-best. $^\dagger$ denotes \method without quality optimization.}
    \setlength{\tabcolsep}{1pt}
    \renewcommand{\arraystretch}{0.95}
    \resizebox{\columnwidth}{!}{
    \begin{tabular}{l|cccc|ccc|c}
    \toprule
    \multirow{2}{*}{\textbf{Method}}
    & \multicolumn{4}{c|}{\textit{TP@FP1\%}}
    & \multicolumn{3}{c|}{\textit{Quality}}
    & \textit{Rank} \\
    \cline{2-9}
    & \textbf{No Atk} & \textbf{Doc-P I} & \textbf{\textit{Doc-P II}} & \textbf{Doc-T}
    & \textbf{PS-W} & \textbf{PS-L} & \textbf{PS-T}
    & \textbf{Avg}~($\Delta$) \\
    \midrule
    \multicolumn{9}{c}{\textit{Mistral-Small-3.1-24B-Base-2503}} \\
    \midrule
    None & -- & -- & -- & -- & -- & -- & -- & 5.69~(+0.00) \\
    EXP & 99.4 & 24.6 & 11.5 & 36.4 & 50.8 & 49.0 & 0.2 & 5.38~(-0.32) \\
    KGW & 100.0 & \underline{77.7} & 29.0 & 64.2 & 41.6 & 57.8 & 0.6 & 6.56~(+0.87) \\
    UPV & 99.7 & 67.1 & 44.5 & \underline{71.5} & 44.0 & 55.6 & 0.4 & 6.17~(+0.48) \\
    SynthID & 99.2 & 44.1 & 25.1 & 44.1 & 59.1 & 40.5 & 0.4 & 4.64~(-1.05) \\
    MorphMark & 99.7 & 55.5 & 28.0 & 52.7 & 41.7 & 57.9 & 0.4 & 6.17~(+0.48) \\
    SemStamp & 95.1 & 28.6 & 17.2 & 54.7 & 56.4 & 43.6 & 0.0 & 5.19~(-0.50) \\
    $k$-SemStamp & 92.5 & 22.5 & 15.5 & 44.5 & 58.4 & 41.4 & 0.2 & 4.93~(-0.77) \\
    PMark & 96.2 & 40.0 & \underline{48.9} & 67.6 & 64.8 & 35.0 & 0.2 & 4.17~(-1.52) \\
    \rowcolor{gray!30}
    \method & 94.5 & \textbf{78.6} & \textbf{76.8} & \textbf{84.2} & 82.0 & 5.0 & 13.0 & 6.10~(+0.41) \\
    \rowcolor{gray!15}
    \method$^\dagger$ & 95.2 & 78.8 & 76.9 & 90.0 & 78.0 & 6.0 & 16.0 & {--} \\
    \midrule
    \multicolumn{9}{c}{\textit{Nemotron-Nano-9B-v2-Base}} \\
    \midrule
    None & -- & -- & -- & -- & -- & -- & -- & 5.23~(+0.00) \\
    EXP & 100.0 & 39.1 & 16.9 & 58.6 & 36.7 & 58.4 & 5.0 & 6.21~(+0.98) \\
    KGW & 100.0 & 55.9 & 27.3 & \underline{56.2} & 39.6 & 56.0 & 4.4 & 5.98~(+0.75) \\
    UPV & 99.7 & \underline{75.0} & \underline{53.2} & \underline{79.5} & 42.2 & 53.2 & 4.6 & 5.64~(+0.41) \\
    SynthID & 100.0 & 42.9 & 22.6 & 53.0 & 54.8 & 41.4 & 3.8 & 4.60~(-0.63) \\
    MorphMark & 99.8 & 48.3 & 20.4 & 49.0 & 45.0 & 50.8 & 4.2 & 5.76~(+0.53) \\
    SemStamp & 96.4 & 40.4 & 27.9 & 61.7 & 52.4 & 44.8 & 2.8 & 4.73~(-0.50) \\
    $k$-SemStamp & 90.5 & 15.6 & 14.6 & 32.1 & 39.2 & 56.2 & 4.6 & 5.56~(+0.33) \\
    PMark & 96.4 & 24.7 & 28.3 & 45.0 & 40.0 & 55.0 & 5.0 & 5.31~(+0.08) \\
    \rowcolor{gray!30}
    \method & 93.8 & \textbf{78.3} & \textbf{72.9} & \textbf{82.3} & 69.4 & 5.2 & 25.4 & 5.98~(+0.75) \\
    \rowcolor{gray!15}
    \method$^\dagger$ & 94.8 & 76.6 & 73.8 & 83.8 & 28.0 & 68.6 & 3.4 & {--} \\
    \bottomrule
    \end{tabular}
    }
    \label{tab:c4_main}
    \end{table}

\textbf{\method achieves dominant robustness under most attack scenarios.} \method attains the highest detection rates across attack settings on both models and datasets. On Mistral with BOOKSUM under Doc-P II, \method achieves 88.1\% TP@FP1\%, compared with 57.5\% for the next-best baseline, yielding a 30.6-point improvement. Similar advantages persist on Nemotron and C4. While UPV occasionally shows higher TP@FP5\% in specific cases, AUC margins confirm \method's stronger detection stability. Most baselines fail to maintain detectable signals under extreme paraphrasing, whereas \method preserves robust detection.

\textbf{The robustness advantage expands with stronger adversaries.} \method's performance gap over baselines widens as attack intensity increases. From Doc-P I to Doc-P II on Mistral-BOOKSUM, KGW drops by 29.6 points in TP@FP1\%, while \method decreases by only 2.4 points. SynthID drops from 23.8\% to 10.4\%, whereas \method remains nearly constant. The same pattern appears across models: on Nemotron-C4, UPV decreases by 21.8 points between attack levels, while \method drops by only 5.4 points. This stability under escalating adversarial pressure indicates that semantic channel encoding preserves watermark signals even when surface-level features are substantially altered.

\begin{figure}[t] 
\centering
\includegraphics[width=\linewidth]{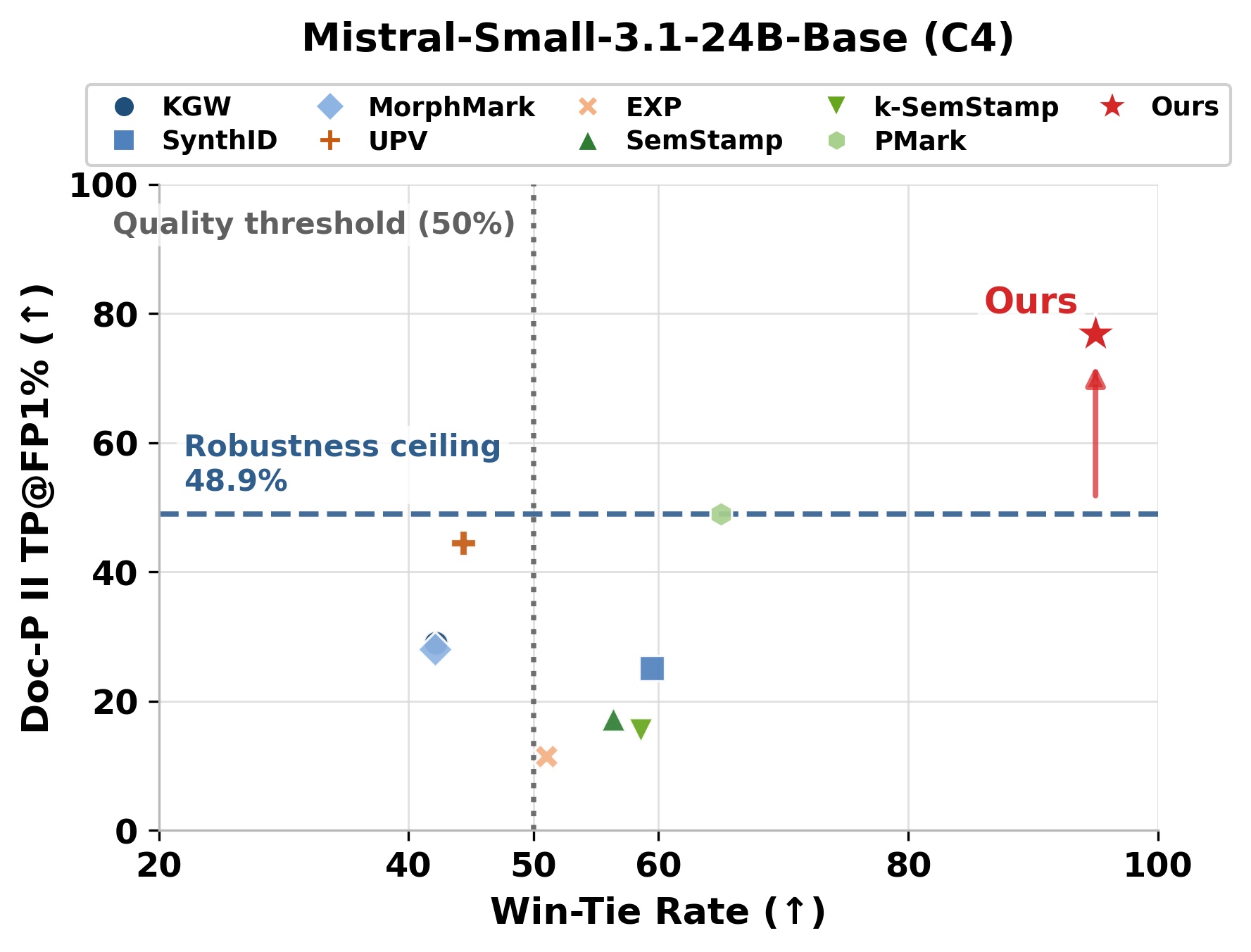}
\caption{Trade-off between robustness and text quality on Mistral-Small-3.1-24B-Base with C4. The x-axis reports the PS-W + PS-T (no worse than) rate.}
\label{fig:trade-off} 
\end{figure}

\textbf{\method maintains competitive text quality while achieving dominant robustness.} Pairwise LLM-judge results show that \method achieves high win rates on C4 and outperforms all baselines, suggesting that sentence-level filtering operation can introduce beneficial diversity. As shown in Figure~\ref{fig:trade-off}, \method surpasses the robustness ceiling of existing methods while maintaining text quality no worse than the unwatermarked baseline. The AvgRank columns in Tables~\ref{tab:booksum_main} and~\ref{tab:c4_main}, obtained via GPT-4.1-mini blind ranking across all methods (including the unwatermarked baseline), support the same conclusion. On Mistral+BOOKSUM, \method achieves an AvgRank of 5.14 ($\Delta=-0.62$), outperforming the unwatermarked baseline (5.76) and remaining comparable to SynthID (5.11). KGW consistently ranks worst across settings ($\Delta$ from +0.75 to +0.91), confirming that hard token-level constraints degrade fluency. Among semantic-level methods, \method ranks comparably to or better than SemStamp and $k$-SemStamp in most settings while delivering substantially higher robustness. On Nemotron, most watermarking methods rank above the unwatermarked baseline, suggesting that watermark-induced sampling diversity can occasionally improve generation quality. Overall, both pairwise and ranking evaluations indicate that \method's robustness gains do not come at the expense of text quality.

\textbf{\method shows architecture-agnostic and distribution-agnostic consistency.} \method maintains TP@FP1\% above 72\% under all attack scenarios across both backbone models, while baselines show high variance. Under Doc-P II, PMark drops by 29.2 points across models, whereas \method exhibits minimal fluctuation across datasets with different distributions. \method consistently achieves at least 97.0\% AUC across all 16 attack-model-dataset combinations, and no baseline matches this uniformity. These results suggest that the self-anchored encoding strategy captures fundamental linguistic invariants rather than model-specific or domain-specific artifacts.
\begin{figure}[t]
\centering
\includegraphics[width=\linewidth]{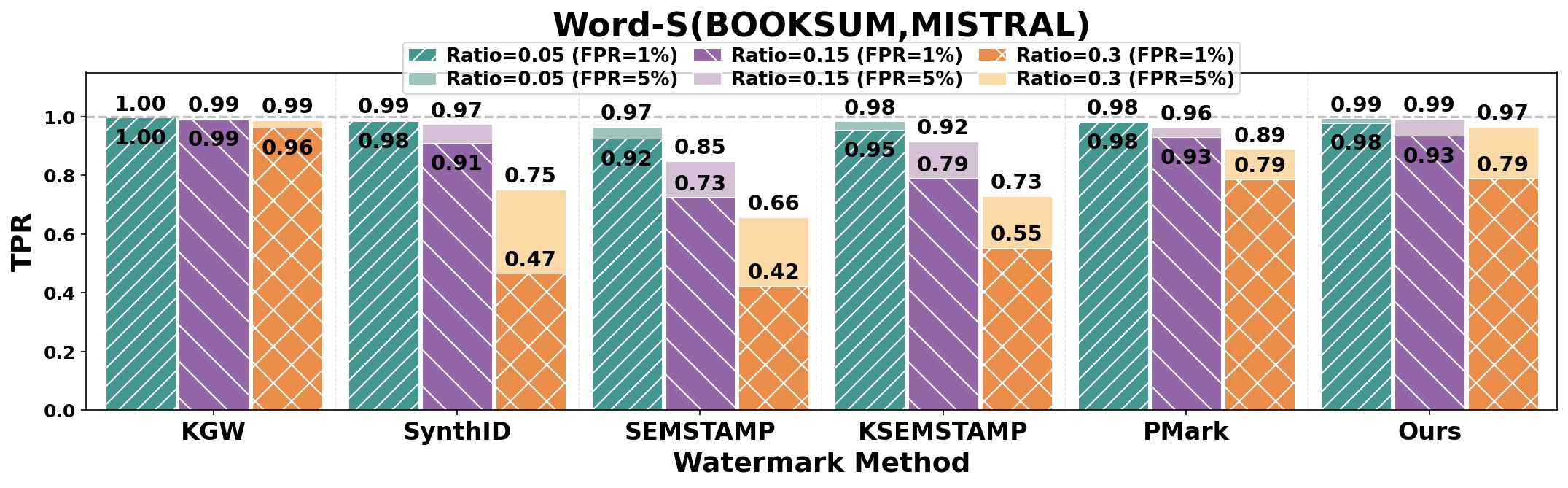}
\caption{Word-S attack results.}
\label{fig:word_s}
\end{figure}

\textbf{Robustness under word-level attacks.} Beyond paragraph-level paraphrasing, we further evaluate word-level perturbations that alter local lexical surface forms. Specifically, Word-S replaces a fraction of words with synonyms or context-compatible alternatives, while preserving most sentence-level semantics. As shown in Figure~\ref{fig:word_s}, existing semantic-level watermarks such as SemStamp, $k$-SemStamp, and PMark degrade substantially because their verification relies on stable sentence segmentation and sentence-internal structure. A similar trend appears under Word-D (Appendix~\ref{app:word_attack}). Token-level methods also show a robustness--quality tension, since methods with better fluency tend to lose detectability faster as perturbation strength increases, while methods with stronger robustness incur larger quality loss. In contrast, \method remains more stable across perturbation types, because its signal is anchored to sentence-level semantics rather than exact token realization. This result is consistent with Tables~\ref{tab:booksum_main} and~\ref{tab:c4_main}, and shows that \method mitigates the structural fragility of prior semantic-level baselines without inheriting the strongest quality penalties of robust token-level methods.

\subsection{Efficiency Performance \textbf{(RQ3)}}
\begin{table}[htbp]
\centering
\caption{Computation overhead of different methods. *: PMark introduces additional overhead for reconstructing the green-red threshold at detection. Abbrev.: Lat = latency, Tok = token, Sen = output sentence, Out = output token.}
\resizebox{\columnwidth}{!}{
\begin{tabular}{l|cccc|cc}
\hline
\multirow{2}{*}{Method} & \multicolumn{4}{c|}{Generation} & \multicolumn{2}{c}{Detection} \\
\cline{2-7}
 & Lat/Tok & Lat/Sen & Tok/Out & Tok/Sen & Lat/Tok & Tok/Sen \\
\hline
KGW       & 0.037 & 0.91 & 1.00 & 24.52 & 0.0008 & 0.00 \\
SynthID   & 0.039 & 0.81 & 1.00 & 20.78 & 0.0007 & 0.00 \\
MorphMark & 0.037 & 0.87 & 1.00 & 23.31 & 0.0009 & 0.00 \\
UPV       & 0.041 & 0.85 & 1.00 & 20.47 & 0.0001 & 0.00 \\
EXP       & 0.279 & 5.71 & 1.00 & 21.06 & 0.0025 & 0.00 \\
\hline
SemStamp  & 0.395 & 7.58 & 64.50 & 1294.76 & 0.0063 & 0.00 \\
k-SemStamp & 0.361 & 6.65 & 65.05 & 1255.72 & 0.0043 & 0.00 \\
PMark     & 0.205 & 4.63 & 49.73 & 1146.01 & 0.1861 & $1068.86^*$ \\
\method       & 0.218 & 4.61 & 47.44 & 1080.97 & 0.0014 & 0.00 \\
\hline
\end{tabular}
}
\label{tab:overhead}
\end{table}
\textbf{\method achieves the lowest overall computation overhead among semantic-level methods.} We compare the computation overhead of \method and existing baselines on BOOKSUM with Mistral-Small-3.1-24B-Base-2503. Following the default MarkLLM configuration, token-level baselines are evaluated with the \texttt{transformers} backend, whereas semantic-level methods are evaluated with the \texttt{vLLM} backend. As shown in Table~\ref{tab:overhead}, token-level methods generally have lower generation latency because they modify token logits in a single decoding pass and do not perform sentence-level candidate sampling. This lower latency, however, comes with the robustness limitations observed in Tables~\ref{tab:booksum_main} and~\ref{tab:c4_main}.

Within the semantic-level group, \method achieves a generation latency of 0.22s per token and 4.61s per sentence, comparable to PMark and approximately $1.7\times$ faster than SemStamp and $k$-SemStamp. This efficiency stems from the self-anchored design, which avoids the iterative context-hashing overhead required by SemStamp-family methods. Moreover, \method consumes the fewest sampled tokens per output token (47.44) and per sentence (1080.97), indicating that the diversity-aware filtering effectively reduces redundant sampling attempts. On the detection side, \method achieves a latency of only 0.0014s per token with negligible additional token consumption, as detection requires only a single forward pass through the embedding model followed by lightweight algebraic operations (majority voting and $z$-test). In contrast, PMark incurs 0.1861s per token and 1068.86 sampled tokens per sentence due to LLM-based re-generation, making \method's detection over $130\times$ faster.

\subsection{Hyperparameter Sensitivity Analysis \textbf{(RQ4)}}

To answer \textbf{RQ4}, we conduct a systematic sensitivity analysis on the key hyperparameters of \method using Mistral on BOOKSUM. We examine the four diversity-related parameters, namely the semantic diversity weight $\lambda_{\text{div}}$, the vocabulary novelty weight $\lambda_{\text{nov}}$, the semantic similarity threshold $\theta_{\text{sem}}$, and the n-gram overlap threshold $\theta_{\text{ngram}}$, as well as the sample budget $N$. For each parameter, we sweep over a range of values while holding the others at their defaults.

\begin{table}[t]
\centering
\caption{Ablation study on diversity-aware filtering. Each row disables one component respectively.}
\resizebox{\columnwidth}{!}{
\begin{tabular}{l|ccc|ccc}
\toprule
\multirow{2}{*}{\textbf{Config}}
& \multicolumn{3}{c|}{\textit{Quality}}
& \multicolumn{3}{c}{\textit{LLM Judge}} \\
\cline{2-7}
& \textbf{SD\%}$\downarrow$ & \textbf{D-2}$\uparrow$ & \textbf{4g\%}$\downarrow$
& \textbf{PS-W} & \textbf{PS-L} & \textbf{PS-T} \\
\midrule
Default & 0.27 & 0.825 & 4.96 & 47.9 & 45.2 & 6.9 \\
w/o $\theta_{\text{ngram}}$ & 1.35 & 0.792 & 8.31 & 48.8 & 47.5 & 3.8 \\
w/o $\theta_{\text{sem}}$ & 1.25 & 0.797 & 7.55 & 47.5 & 52.5 & 0.0 \\
w/o $\lambda_{\text{div}}$ & 0.10 & 0.776 & 7.26 & 45.0 & 53.8 & 1.3 \\
w/o $\lambda_{\text{nov}}$ & 0.71 & 0.762 & 8.70 & 46.3 & 47.5 & 6.3 \\
All Off & 7.65 & 0.570 & 29.11 & 33.0 & 59.0 & 8.0 \\
\bottomrule
\end{tabular}
}
\label{tab:ablation}
\end{table}

\textbf{Each diversity component contributes to text quality.} Table~\ref{tab:ablation} presents the ablation results where each diversity component is individually disabled. Removing the n-gram overlap filter raises the 4-gram repetition rate from 4.96\% to 8.31\% and the sentence duplicate rate from 0.27\% to 1.35\%. Disabling the semantic similarity filter produces a similar degradation, with the 4-gram repetition rate increasing to 7.55\% and the sentence duplicate rate rising to 1.25\%. Turning off the diversity weight $\lambda_{\text{div}}$ reduces Distinct-2 from 0.825 to 0.776 and lowers the LLM judge win rate from 47.9\% to 45.0\%, while disabling the novelty weight $\lambda_{\text{nov}}$ causes the largest drop in Distinct-2 to 0.762 and the highest 4-gram repetition rate among single-component ablations at 8.70\%. When all four components are disabled simultaneously, the quality degradation compounds dramatically: the sentence duplicate rate surges to 7.65\%, the 4-gram repetition rate reaches 29.11\%, and the LLM judge win rate drops to 33.0\%. These results confirm that the hard filtering and soft regularization mechanisms are complementary, each addressing a distinct aspect of text diversity.

\begin{table}[t]
\centering
\caption{Sensitivity to diversity parameters on Mistral+BOOKSUM. Default values are marked with $\ast$.}
\resizebox{\columnwidth}{!}{
\begin{tabular}{c|cccc||c|cccc}
\toprule
$\lambda_{\text{div}}$ & \textbf{SD\%}$\downarrow$ & \textbf{D-2}$\uparrow$ & \textbf{4g\%}$\downarrow$ & \textbf{W\%}
& $\lambda_{\text{nov}}$ & \textbf{SD\%}$\downarrow$ & \textbf{D-2}$\uparrow$ & \textbf{4g\%}$\downarrow$ & \textbf{W\%} \\
\midrule
0.0 & 0.10 & 0.78 & 7.26 & 45.0
& 0.0 & 0.71 & 0.76 & 8.70 & 46.3 \\
0.1 & 0.45 & 0.78 & 7.67 & 47.5
& 0.1 & 0.61 & 0.78 & 7.10 & 48.8 \\
0.35$^\ast$ & 0.27 & 0.82 & 4.96 & 47.9
& 0.2$^\ast$ & 0.27 & 0.82 & 4.96 & 47.9 \\
0.7 & 0.83 & 0.83 & 5.16 & 57.5
& 0.3 & 0.31 & 0.84 & 4.37 & 51.3 \\
1.0 & 0.62 & 0.83 & 5.40 & 60.0
& 0.5 & 0.10 & 0.85 & 3.27 & 56.3 \\
\midrule
$\theta_{\text{sem}}$ & \textbf{SD\%}$\downarrow$ & \textbf{D-2}$\uparrow$ & \textbf{4g\%}$\downarrow$ & \textbf{W\%}
& $\theta_{\text{ngram}}$ & \textbf{SD\%}$\downarrow$ & \textbf{D-2}$\uparrow$ & \textbf{4g\%}$\downarrow$ & \textbf{W\%} \\
\midrule
0.6 & 0.71 & 0.84 & 4.73 & 46.3
& 0.2 & 0.32 & 0.83 & 3.58 & 51.3 \\
0.7 & 0.10 & 0.84 & 3.35 & 48.8
& 0.3 & 0.31 & 0.81 & 4.81 & 45.0 \\
0.8$^\ast$ & 0.27 & 0.82 & 4.96 & 47.9
& 0.4$^\ast$ & 0.27 & 0.82 & 4.96 & 47.9 \\
0.9 & 0.52 & 0.80 & 7.51 & 48.8
& 0.6 & 0.52 & 0.81 & 6.80 & 40.0 \\
1.0 & 1.25 & 0.80 & 7.55 & 47.5
& 1.0 & 1.35 & 0.79 & 8.31 & 48.8 \\
\bottomrule
\end{tabular}
}
\label{tab:sensitivity}
\end{table}

\textbf{Soft regularization weights exhibit a clear quality improvement trend.} As shown in Table~\ref{tab:sensitivity}, increasing $\lambda_{\text{div}}$ from 0 to 1.0 steadily improves Distinct-2 from 0.78 to 0.83 and raises the LLM judge win rate from 45.0\% to 60.0\%, confirming that the semantic diversity bonus effectively encourages varied sentence selection. Similarly, increasing $\lambda_{\text{nov}}$ from 0 to 0.5 reduces the 4-gram repetition rate from 8.70\% to 3.27\% and improves Distinct-2 from 0.76 to 0.85, demonstrating that the vocabulary novelty bonus successfully promotes lexical variety. Both parameters show monotonic quality gains without degrading detection performance, as the watermark scoring mechanism in Eq.~\ref{eq:tanh_score} remains the dominant component in the final selection score.

\textbf{Hard filtering thresholds control the quality floor.} The semantic similarity threshold $\theta_{\text{sem}}$ and the n-gram overlap threshold $\theta_{\text{ngram}}$ act as hard constraints that remove overly similar candidates before scoring. Relaxing $\theta_{\text{sem}}$ from 0.7 to 1.0 increases the 4-gram repetition rate from 3.35\% to 7.55\% and raises the sentence duplicate rate from 0.10\% to 1.25\%, as more semantically redundant candidates enter the scoring pool. A similar pattern holds for $\theta_{\text{ngram}}$: increasing it from 0.2 to 1.0 raises the 4-gram repetition rate from 3.58\% to 8.31\%. These results confirm that hard filtering provides a reliable quality floor by eliminating degenerate candidates early in the pipeline.

\begin{table}[t]
\centering
\caption{Effect of sample budget $N$ on Mistral+BOOKSUM.}
\resizebox{\columnwidth}{!}{
\begin{tabular}{c|ccc|ccc}
\toprule
\multirow{2}{*}{$N$}
& \multicolumn{3}{c|}{\textit{TP@FP1\%}}
& \multicolumn{3}{c}{\textit{Quality}} \\
\cline{2-7}
& \textbf{Doc-P I} & \textbf{Doc-P II} & \textbf{Doc-T}
& \textbf{SD\%}$\downarrow$ & \textbf{D-2}$\uparrow$ & \textbf{4g\%}$\downarrow$ \\
\midrule
64 & 93.1 & 88.8 & 90.6 & 0.27 & 0.82 & 4.96 \\
128 & 100.0 & 93.8 & 93.8 & 0.52 & 0.88 & 1.53 \\
200 & 97.0 & 95.0 & 90.5 & 0.08 & 0.84 & 4.49 \\
\bottomrule
\end{tabular}
}
\label{tab:budget}
\end{table}

\textbf{Larger sample budgets improve both robustness and quality.} Table~\ref{tab:budget} shows that increasing the sample budget $N$ from 64 to 128 and 200 consistently improves detection rates under all attack scenarios. At $N=128$, \method achieves 100.0\% TP@FP1\% under Doc-P I and 93.8\% under Doc-P II, compared to 93.1\% and 88.8\% at $N=64$. The quality metrics also benefit from larger budgets, as a richer candidate pool provides more opportunities to find sentences that simultaneously satisfy the watermark constraint and the diversity criteria. The 4-gram repetition rate drops from 4.96\% at $N=64$ to 1.53\% at $N=128$, and Distinct-2 improves from 0.82 to 0.88. We adopt $N=64$ as the default to balance performance with computational cost, as discussed in Section~\ref{sec:main_res}.

\begin{table}[t]
\centering
\caption{Ablation on the hyperbolic transform. ``Gen'' and ``Det'' indicate whether $\tanh$ scoring is applied during generation and detection respectively.}
\resizebox{\columnwidth}{!}{
\begin{tabular}{cc|cccc}
\toprule
\textbf{Gen} & \textbf{Det}
& \textbf{No Atk} & \textbf{Doc-P I} & \textbf{Doc-P II} & \textbf{Doc-T} \\
\midrule
\ding{55} & \ding{55} & 94.0 & 70.6 & 67.4 & 49.0 \\
\ding{55} & \ding{51} & 95.2 & 75.2 & 68.8 & 48.8 \\
\ding{51} & \ding{55} & 99.8 & 94.8 & 93.2 & 95.2 \\
\ding{51} & \ding{51} & 99.4 & 94.8 & 92.8 & 94.0 \\
\bottomrule
\end{tabular}
}
\label{tab:transform}
\end{table}

\textbf{The hyperbolic transform during generation is the dominant factor for robustness.} Table~\ref{tab:transform} isolates the contribution of the $\tanh$ scoring mechanism in generation and detection on Mistral+BOOKSUM. Without $\tanh$ in generation, detection rates under Doc-P II and Doc-T remain below 70\%, regardless of whether $\tanh$ is applied during detection. Enabling $\tanh$ in generation alone boosts Doc-P II from 67.4\% to 93.2\% and Doc-T from 49.0\% to 95.2\%, a gain of more than 25 points. This improvement arises because the generation-side $\tanh$ transform actively selects candidates with stronger channel alignment, producing sentences whose watermark signals are more resilient to paraphrasing. Adding $\tanh$ during detection on top of $\tanh$ generation yields comparable results, confirming that the generation-side transform is the primary contributor. We retain $\tanh$ in both stages for pipeline consistency.

\subsection{Case Study}
\begin{figure*}[t]
    \centering

    \begin{minipage}[t]{0.32\textwidth}
        \centering
        \includegraphics[width=\linewidth]{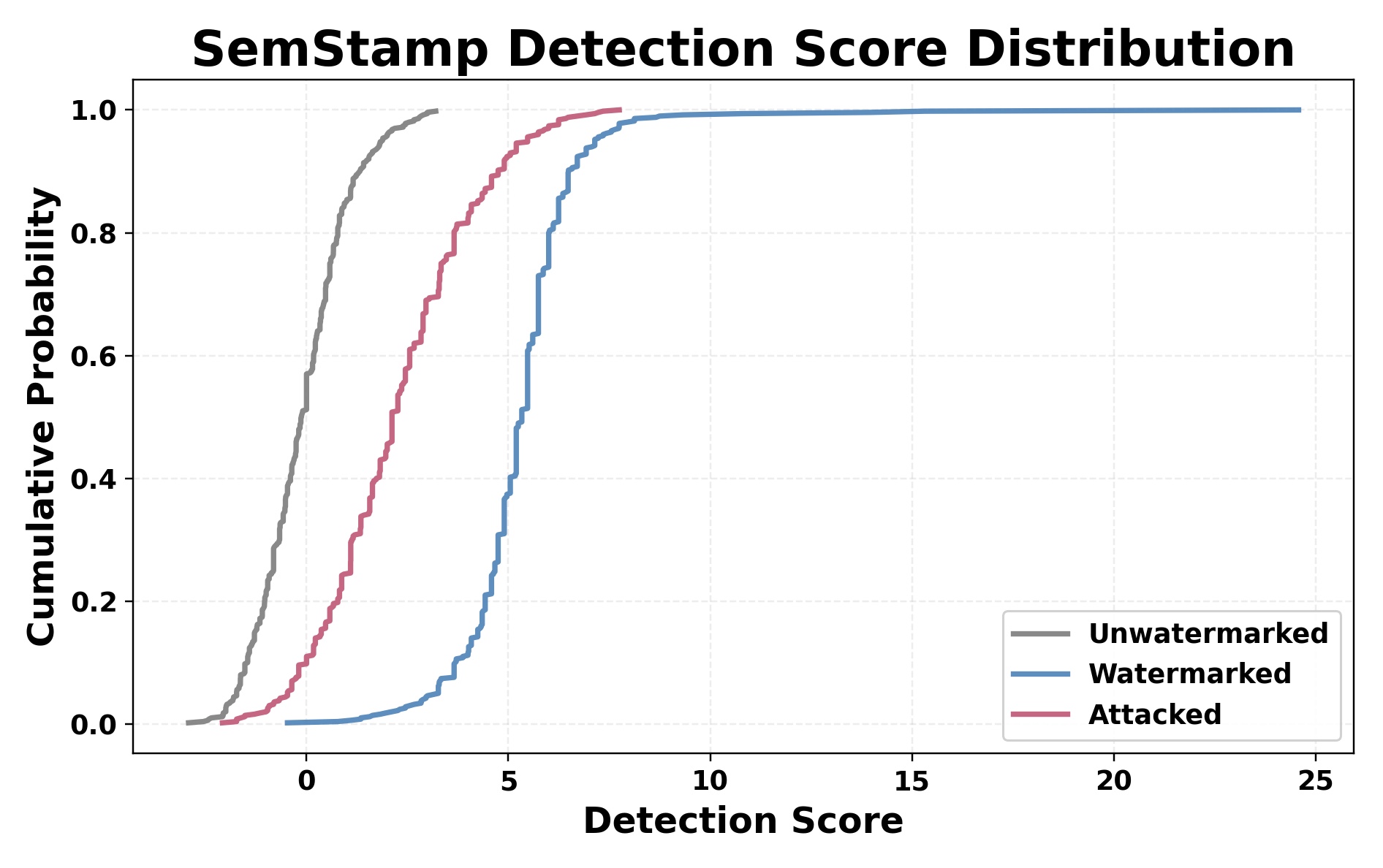}
        \vspace{-0.5em}
    \end{minipage}
    \hfill
    \begin{minipage}[t]{0.32\textwidth}
        \centering
        \includegraphics[width=\linewidth]{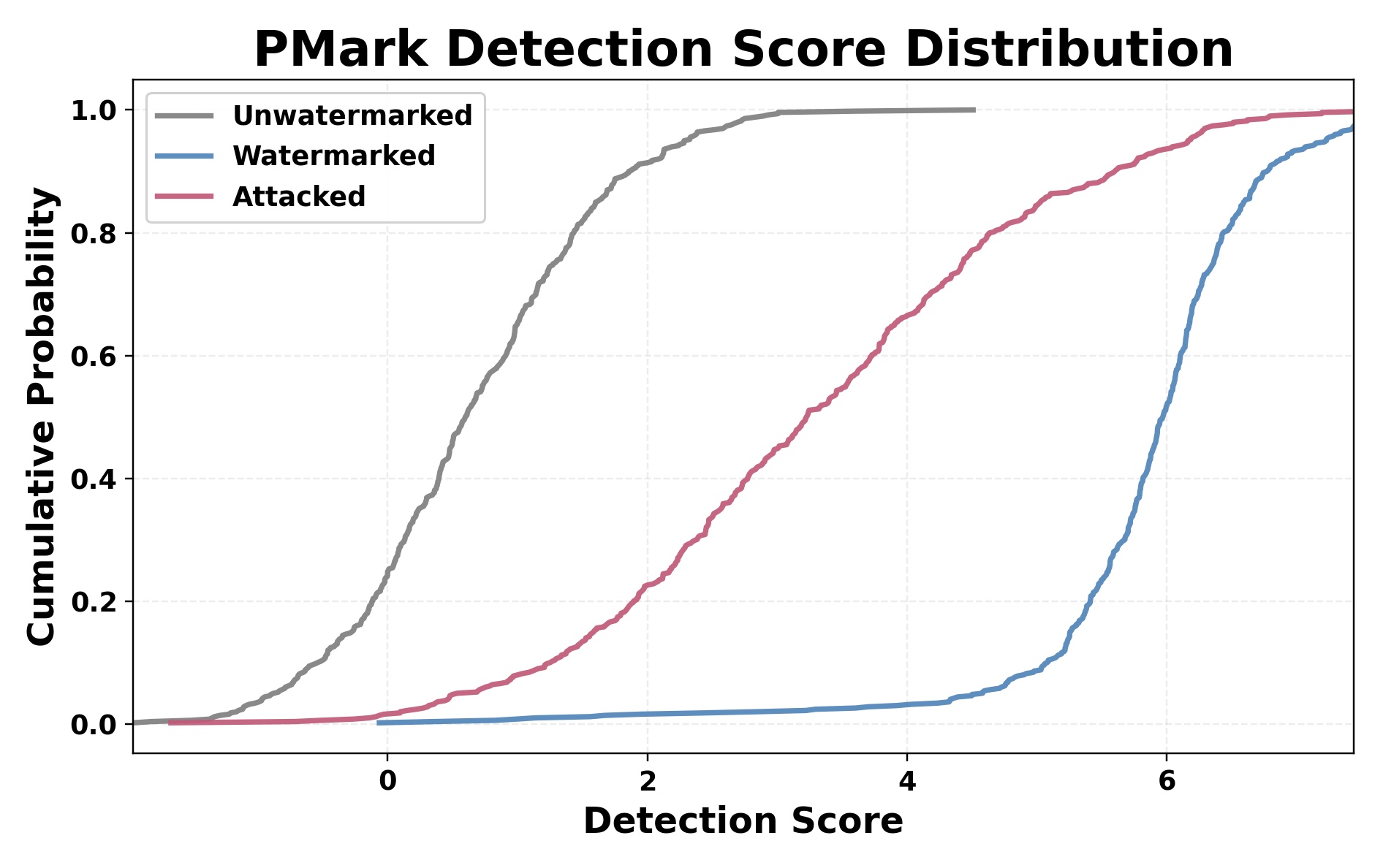}
        \vspace{-0.5em}
    \end{minipage}
    \hfill
    \begin{minipage}[t]{0.32\textwidth}
        \centering
        \includegraphics[width=\linewidth]{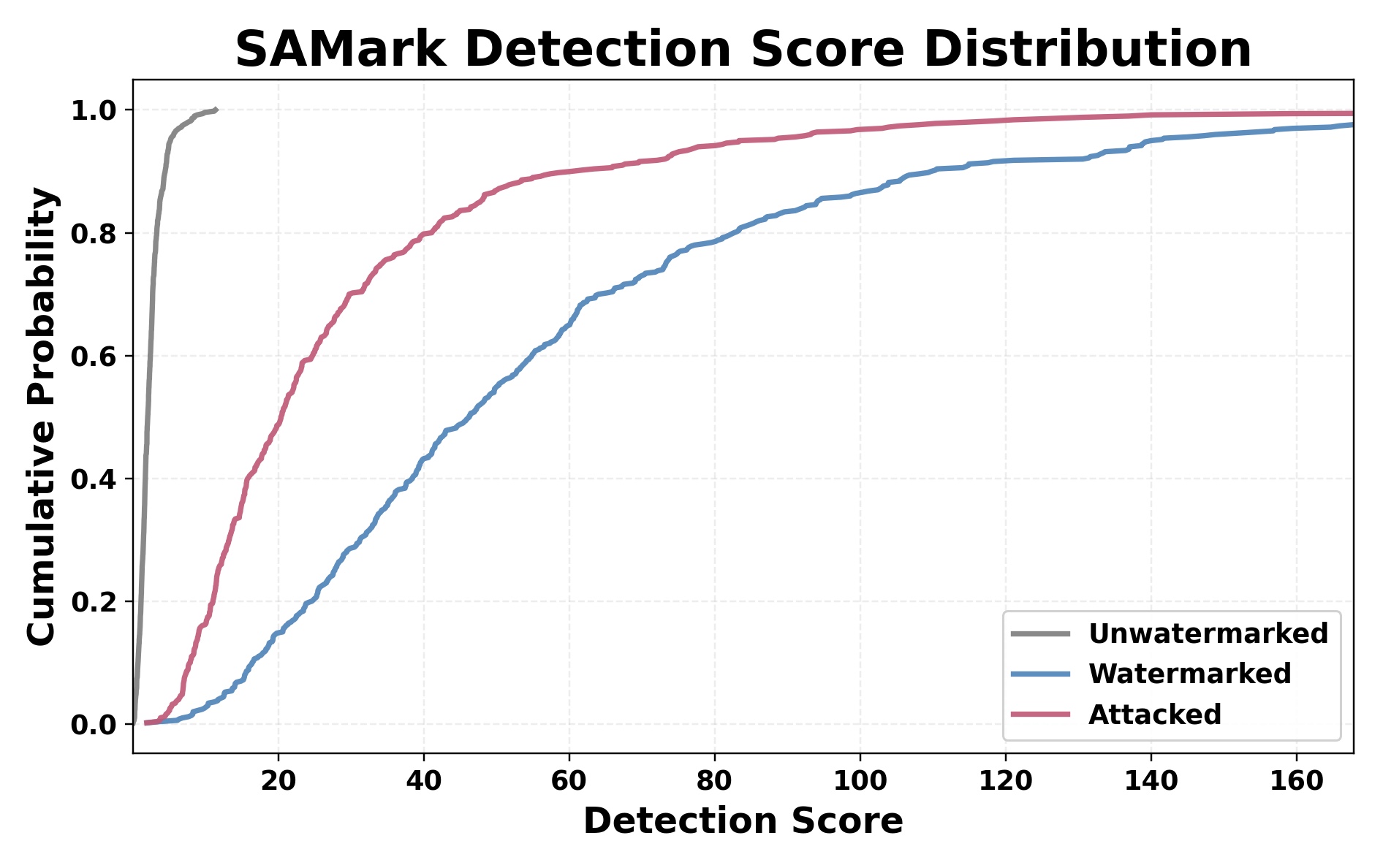}
        \vspace{-0.5em}
    \end{minipage}

    \caption{
    Detection score distributions for unwatermarked, watermarked, and attacked samples on BOOKSUM with Mistral-Small-3.1-24B-Base-2503.
    }
    \label{fig:score_dist}
\end{figure*}
\begin{figure}[t]
    \centering
    \includegraphics[width=\linewidth]{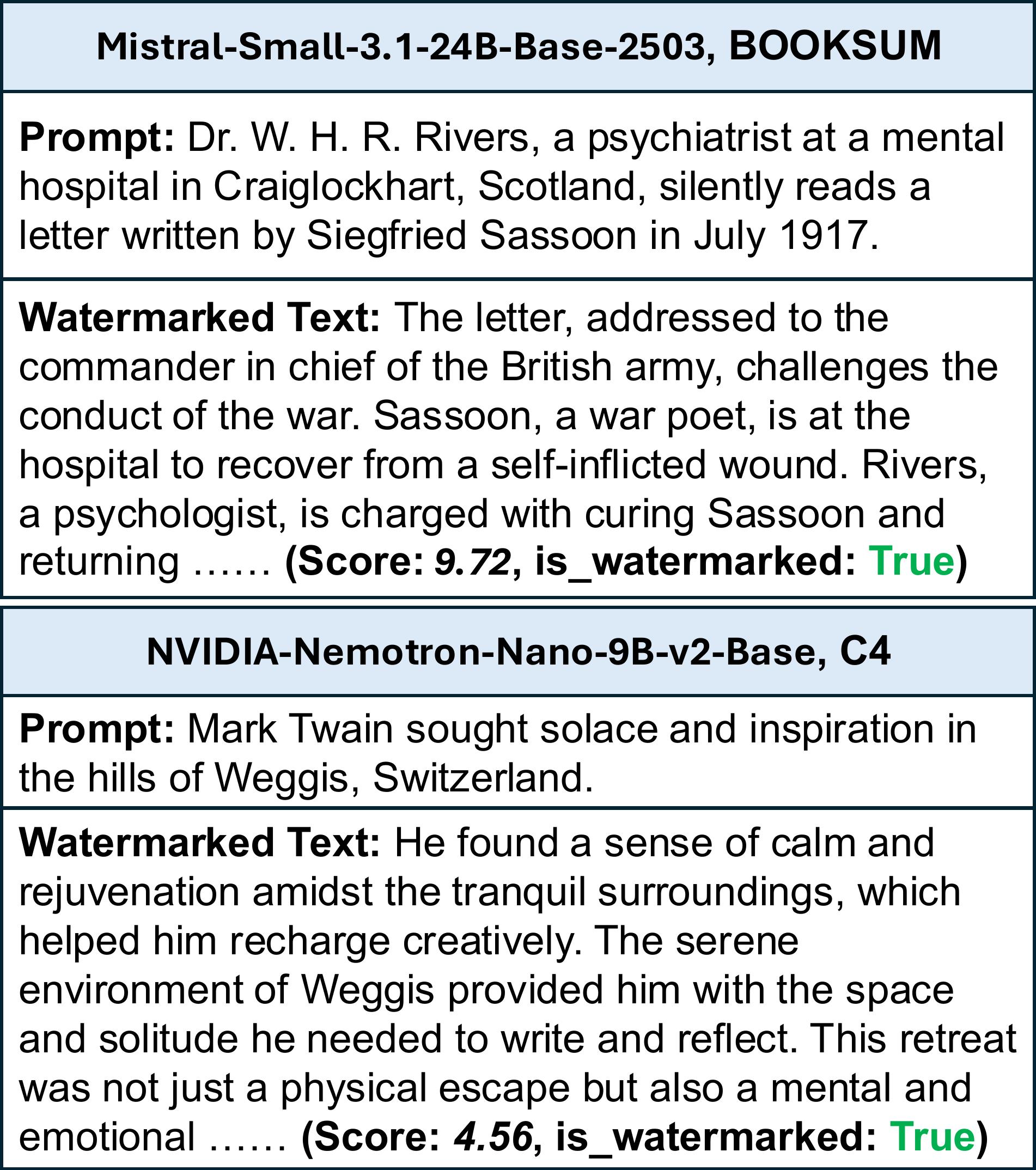}
    \caption{
    Case study of text generated by \method.
    }
    \label{fig:case_study}
    \vspace{-6mm}
\end{figure}

\textbf{Score distribution analysis.}
Figures~\ref{fig:score_dist} show the same trend at the distribution level. For SemStamp and PMark, the attacked score distribution shifts toward the unwatermarked distribution, which increases overlap and weakens threshold-based detection. For \method, attacked samples still remain clearly separated from unwatermarked samples, although their scores are lower than the original watermarked scores. This separation explains the higher robustness observed in the main results, where \method preserves a usable detection margin under aggressive paraphrasing. We also include certain specific cases in Figure~\ref{fig:case_study} for reference.
\section{Related Work}\label{sec:rw}
In recent years, LLM watermarking techniques have advanced rapidly, with KGW~\cite{kgw} being a seminal contribution that embeds statistically detectable signals into text by introducing logit biases to pseudorandomly selected tokens. Following the taxonomy adopted in prior work, existing approaches can be broadly categorized into three types: (1) zero-bit watermarking, (2) multi-bit watermarking, and (3) semantic-level watermarking. Our work falls within the third category, while drawing inspiration from the other two.

\subsection{Zero-bit Watermarking}
The primary objective of zero-bit watermarking is to determine whether a given text is machine-generated, with KGW~\cite{kgw} being the most representative. Subsequent work optimizes the KGW framework in both \emph{detectability} and \emph{text quality}.

\noindent\textbf{Detectability.} EWD~\cite{lu2024entropy} enhances sensitivity in low-entropy contexts (e.g., code) by prioritizing high-entropy tokens via entropy-based $z$-score weighting. WinMax~\cite{kirchenbauer2023reliability} combats signal dilution in mixed texts using a sliding-window approach to dynamically compute multi-window $z$-scores and select maxima, boosting robustness. ITS and EXP~\cite{kuditipudi2024robust} further enhance detectability by embedding a secret key sequence into tokens and computing alignment costs during detection to mitigate adversarial token manipulations.

\noindent\textbf{Text Quality.} Existing methods mitigate quality degradation by either embedding watermarks in high-entropy, low-semantic-impact positions using entropy-aware red-green list distributions~\cite{wouters2023optimizing}, optimizing list distributions via semantic alignment to disperse similar words across lists~\cite{chen2024watme}, or increasing context-relevant word selection probability~\cite{fu2024watermarking}. Beyond autoregressive LLMs, recent studies have also investigated order-agnostic watermarking for non-autoregressive paradigms~\cite{chen2025watermark,wu2025dmark}.

\subsection{Multi-bit Watermarking}
Many applications require watermarks to convey richer information, such as user identifiers and timestamps, which zero-bit methods cannot support. Hence, another line of work focuses on embedding a multi-bit message (i.e., a binary string containing several bits). Existing solutions fall into two main categories: the first splits the vocabulary into distinct groups (like color categories) to encode data bits~\cite{fernandez2023three}, and the second divides the text into chunks with each chunk holding part of the payload~\cite{wang2023towards}. However, when either method partitions too finely (e.g., too many groups or tiny chunks), the watermark becomes weak and hard to detect. To address this, Yoo et al.~\cite{yoo2024advancing} merge both approaches, avoiding extreme splitting while maintaining enough data capacity. More recently, Qu et al.~\cite{qu2025provably} empirically observe that existing schemes suffer from either decoding efficiency or detection accuracy when encoding larger bit payloads, and propose a segmented construction to balance the two; Xu et al.~\cite{xu2024robust} further refine payload robustness through majority-bit-aware aggregation. Another direction focuses on designing robust mechanisms that can trace content in settings with many adaptive users attempting to evade detection~\cite{cohen2025watermarking}, establishing rigorous statistical guarantees for ultra-low false positive rates while extending the mechanism to multi-bit message embedding.

\subsection{Semantic-level Watermarking}
The aforementioned methods operate primarily at the token level, making watermark signals vulnerable to adversarial rewriting such as paraphrasing. Sentence-level semantic watermarking (SWM) addresses this by embedding signals into sentence semantics, thereby offering robustness against lexical rewrites.

A representative work is SemStamp~\cite{semstamp}, which extends the idea of prefix-hashing from KGW into the sentence semantic space. Specifically, it employs locality-sensitive hashing~\cite{lsh1} to compute a hash from the preceding sentence, which defines the ``green'' region for embedding the subsequent sentence. During detection, SemStamp counts the number of sentences whose embeddings fall within the corresponding green regions, conditioned on their preceding sentences, and leverages this statistic to perform a watermark-detection test. A follow-up, k-SemStamp~\cite{ksemstamp}, mitigates the quality degradation caused by SemStamp's rigid restriction to a single semantic partition by allowing sampling from the $k$ nearest-neighbor partitions, thereby improving fluency and semantic preservation. Similarly, SemaMark~\cite{ren2023robust} also relies on prefix hashing but adopts a Normalized Embedding Ring strategy. Beyond hash-based approaches, CoheMark~\cite{cohemark} and SimMark~\cite{simmark} explore more flexible watermarking criteria using fuzzy c-means clustering or sentence similarity, while PersonaMark~\cite{zhang2024personamark} hashes sentence structures for personalized embedding. PMark~\cite{huo2026pmark} addresses the distorted sentence distribution observed in prior methods and proposes a median-estimation-based approach to achieve distortion-free semantic-level watermarking with multi-channel constraints. Other recent efforts explore sparse autoencoders~\cite{saemark} and LLM-based paraphrasers~\cite{xu2024robust} for sentence-level multi-bit embedding.

However, existing SWM methods inherit KGW's \emph{prefix-based} design: the watermark signal in each sentence is pseudorandomly conditioned on its preceding context. Such dependence on sentence order makes them highly vulnerable to paragraph-level paraphrasing attacks that involve sentence reordering, splitting, or merging. Meanwhile, quality control in SWM remains underexplored, as simple n-gram filters fail to address semantic-level redundancy across sentences.

Our work falls within the domain of semantic-level watermarking but is inspired by token-level designs that couple watermarks with a global secret key sequence~\cite{kuditipudi2024robust}, a concept also explored in prior works including undetectable watermarks~\cite{christ2024undetectable}, bi-level signatures~\cite{zhou2024bileve}, and publicly-detectable watermarks~\cite{fairoze2025publicly}. These approaches either operate at the token level or fail to maintain robustness under structural or paragraph-level perturbations. Meanwhile, although both our method and existing multi-bit watermarking approaches embed multi-bit-like signals into text, our goal is not to transmit a specific payload across the entire document; instead, we propose a \emph{self-anchored} semantic-level watermarking framework that severs the reliance on contextual order and incorporates diversity-aware filtering for quality preservation, pushing the Pareto frontier of the robustness-quality trade-off in text watermarking.
\section{Limitations and Future Work}\label{sec:dis}
Despite its strong paragraph-level paraphrase robustness, \method has several limitations. First, like other semantic-level watermarking methods, it requires sampling multiple sentence candidates before selecting one that satisfies the target constraint, since current autoregressive LLMs are decoded token by token rather than sentence by sentence. This extra sampling loop makes generation latency higher than that of token-level methods. Second, our diversity-aware filtering improves text quality in practice but does not yet provide a formal distortion-free guarantee, because its hard filters and soft regularizers still modify the effective sampling distribution. Finally, our work focuses on zero-bit watermarking, which embeds only a yes/no signal into generated text and is not directly applicable to multi-bit scenarios that require payload encoding.

Several directions could address these limitations. On the efficiency side, emerging architectures such as diffusion language models and large concept models operate natively over sentence- or concept-level units and may provide a more direct interface for semantic watermarking, removing the need for repeated candidate sampling; complementary techniques such as steering generation could further guide the model toward valid candidates in a single pass. On the distortion side, a deeper theoretical understanding of the geometry of semantic embedding spaces, beyond the empirically motivated hyperbolic transform used here, could inform sampling strategies that remain robust to paragraph-level paraphrasing while achieving a formal distortion-free guarantee. For the multi-bit extension, the self-anchored partition can be generalized from a binary green/red split to a multi-way partition of the semantic space, allowing each sentence to carry several payload bits while preserving robustness to paraphrasing. We leave these as future work.

\section{Conclusion}\label{sec:con}
Current token-level and semantic-level watermarking methods rely on context-aware random seeders or predefined private keys during generation. However, this dependency leaves them highly vulnerable to paragraph-level paraphrasing attacks that involve word substitution and sentence permutation, severely compromising watermark traceability in adversarial settings. To address this, we introduce self-anchored watermarking, a concept in which the green region at timestep $t$ is determined exclusively by the semantics of the $t$-th sentence. By removing reliance on contextual order, this design improves resilience to global paraphrasing attacks. To the best of our knowledge, \method is the first watermarking framework to achieve robustness against paragraph-level paraphrasing. Experimental results show that our approach consistently surpasses existing SWM baselines in robustness while maintaining competitive generation quality, establishing a more reliable paradigm for semantic-level watermarking.


\section*{LLM Usage Considerations}
We used AI assistants for two purposes: (1) generating routine code and boilerplate functions, which were subsequently reviewed and debugged by humans, and (2) performing grammatical review and sentence-level editing of the manuscript. The research methodology, findings, and analysis were independently proposed and conducted.
\section*{Ethics Considerations}
This work studies text watermarking for identifying AI-generated content. Our experiments use publicly available datasets and generated model outputs, and do not involve recruiting human subjects or collecting personal information. Potential risks include false-positive detections and misuse of watermark detectors for unsupported attribution decisions. We therefore report low-FPR detection metrics and emphasize that watermarking evidence should be interpreted as a statistical signal rather than definitive proof of authorship.
\clearpage
\bibliographystyle{IEEEtran}
\bibliography{references}

\appendices
\section{Experiment Setup Details}\label{app:setup}
\subsection{Baselines}
\textbf{MarkLLM.} We use the official MarkLLM implementation~\cite{markllm}\footnote{\url{https://github.com/THU-BPM/MarkLLM}} to reproduce the results of various token-level baselines, including KGW~\cite{kgw}, UPV~\cite{upv}, MorphMark~\cite{morphmark}, EXP~\cite{exp}, and SynthID~\cite{synthid}. During generation, we set \texttt{max\_new\_tokens} to 256, which is comparable to the generation length used by semantic-level methods. The temperature and \texttt{top\_p} are fixed at $0.7$ and $0.95$, respectively, across all experiments to ensure a fair comparison. In addition, for baselines that depend on external networks (e.g., UPV~\cite{upv}), we use the official weights provided by MarkLLM.

\textbf{$k$-SemStamp.} We adapt the official $k$-SemStamp implementation from MarkLLM to align it with other semantic-level methods. Specifically, we set \texttt{max\_new\_sentences} to 12 (instead of a token count), and apply this setting consistently across all semantic-level baselines. We set \texttt{max\_trials} to the default value of $100$ for SemStamp and $k$-SemStamp. For the embedding model, we use the fine-tuned \texttt{all-mpnet-base-v2}~\cite{reimers-2019-sentence-bert}\footnote{\url{https://huggingface.co/AbeHou/SemStamp-c4-sbert}} provided by \cite{ksemstamp}. We also use the $k$-means centroid weights\footnote{\url{https://github.com/abehou/SemStamp}} released by~\cite{ksemstamp}, trained on the C4~\cite{c4} and BookSum~\cite{booksum} datasets. All other hyperparameters are kept at their default settings in the $k$-SemStamp implementation of MarkLLM.

\textbf{SemStamp.} For SemStamp~\cite{semstamp}, we use the authors' official implementation, setting temperature $=0.7$, \texttt{top\_p} $=0.95$, and \texttt{max\_new\_sentences} $=12$ to match the other semantic-level methods. We also use the fine-tuned embedding model for a fair comparison.

\textbf{PMark.} For PMark~\cite{huo2026pmark}, we use the authors' official implementation of the online PMark variant, setting temperature $=0.7$, \texttt{top\_p} $=0.95$, and \texttt{max\_new\_sentences} $=12$ to match the other semantic-level methods, with \texttt{all-mpnet-base-v2} as the embedding model. The channel number is set to $4$, and the sample budget is $N=64$.

\textbf{\method.} During generation, we use the original \texttt{all-mpnet-base-v2} embedding model~\cite{reimers-2019-sentence-bert}\footnote{\url{https://huggingface.co/sentence-transformers/all-mpnet-base-v2}} without fine-tuning. Pivot vectors are generated via QR decomposition of a Gaussian random matrix to ensure orthogonality. We set temperature $=0.7$, \texttt{top\_p} $=0.95$, and \texttt{max\_new\_sentences} $=12$, consistent with other semantic-level methods. The sample budget is $N=64$, and the number of channels is $b=2$.

\subsection{Prompt Templates}\label{app:prompt_templates}
We summarize the prompt templates used in the evaluation pipeline.

\textbf{Doc-T.} The back-translation attack uses two translation prompts:
\begin{itemize}
    \item EN$\rightarrow$ZH: ``Translate the following text from English to Chinese. Only output the translation, nothing else:''
    \item ZH$\rightarrow$EN: ``Translate the following text from Chinese to English. Only output the translation, nothing else:''
\end{itemize}

\textbf{LLM pairwise judge.} For blind ranking, we use pairwise comparison with dataset-specific system prompts:
\begin{itemize}
    \item \texttt{booksum}: ``Compare Summary A/B by accuracy, completeness, coherence, writing quality, and absence of obvious errors.''
    \item \texttt{c4}: ``Compare Continuation A/B by topical relevance, coherence, writing quality, informativeness, and absence of obvious errors.''
\end{itemize}
Both settings require JSON-only output, as shown in Figure~\ref{fig:pairwise_output_template}.
\begin{figure}[h]
    \centering
    \vspace{-2mm}
    \includegraphics[width=\linewidth]{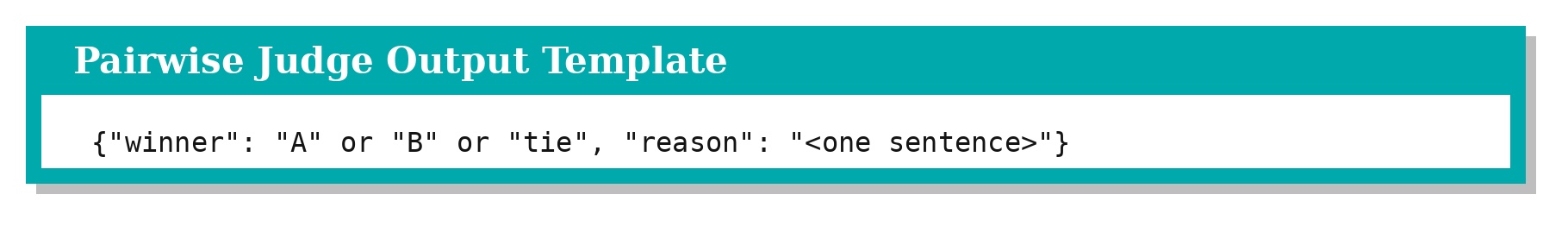}
    \caption{Output template for the LLM pairwise judge.}
    \label{fig:pairwise_output_template}
    \vspace{-2mm}
\end{figure}

For example, the pairwise user query template of \texttt{booksum} is shown in Figure~\ref{fig:booksum_pairwise_query_template}.
\begin{figure}[h]
    \centering
    \vspace{-2mm}
    \includegraphics[width=\linewidth]{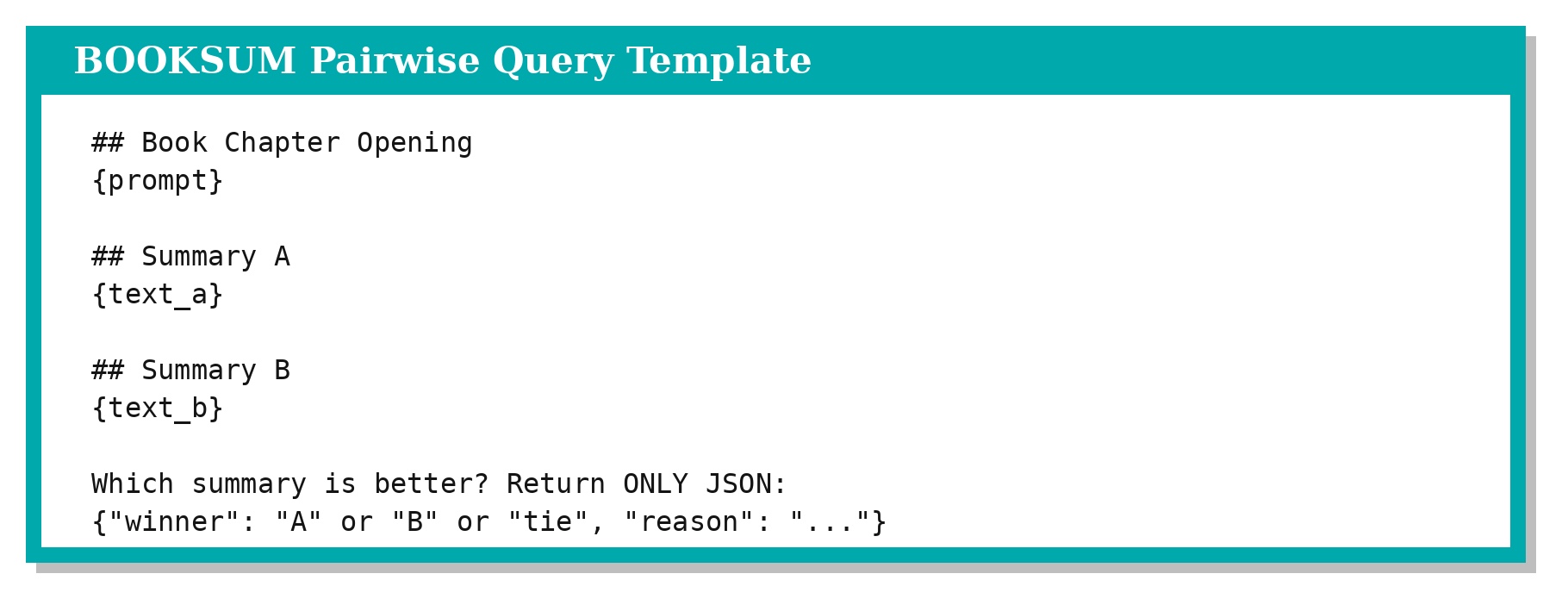}
    \caption{Pairwise user query template for \texttt{booksum}.}
    \label{fig:booksum_pairwise_query_template}
    \vspace{-2mm}
\end{figure}

To ensure blind evaluation, watermarked and unwatermarked texts are assigned to A/B in randomized order for each sample, and final win/lose/tie is mapped back to the watermarked side.

\textbf{LLM ranking prompts.} For multi-method blind ranking, we evaluate multiple candidates (A/B/C/...) jointly with dataset-specific system prompts:
\begin{itemize}
    \item \texttt{booksum}: ``Rank all summaries by accuracy, completeness, coherence, writing quality, and absence of obvious errors.''
    \item \texttt{c4}: ``Rank all continuations by topical relevance, coherence, writing quality, informativeness, and absence of obvious errors.''
\end{itemize}
The output is constrained to JSON, as shown in Figure~\ref{fig:ranking_output_template}, where the first element is best. Ties are explicitly allowed using entries such as \texttt{"A=B"}.
\begin{figure}[h]
    \centering
    \vspace{-2mm}
    \includegraphics[width=\linewidth]{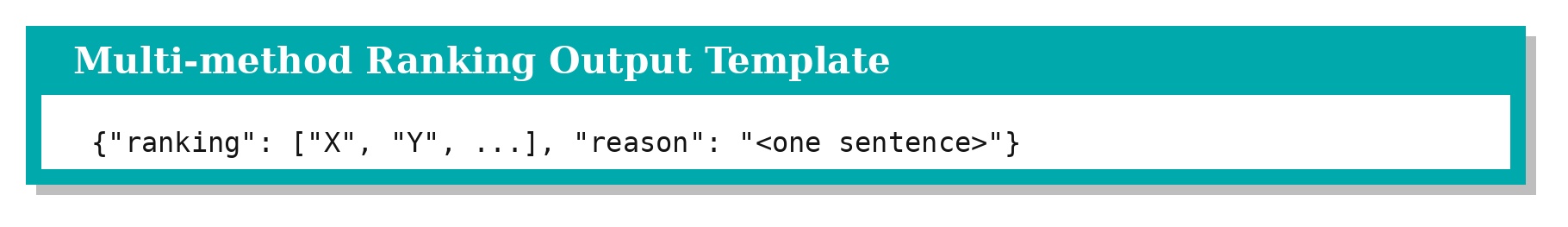}
    \caption{Output template for multi-method blind ranking.}
    \label{fig:ranking_output_template}
    \vspace{-2mm}
\end{figure}

For instance, the ranking query template of \texttt{booksum} is shown in Figure~\ref{fig:booksum_ranking_query_template}.
\begin{figure}[h]
    \centering
    \vspace{-2mm}
    \includegraphics[width=\linewidth]{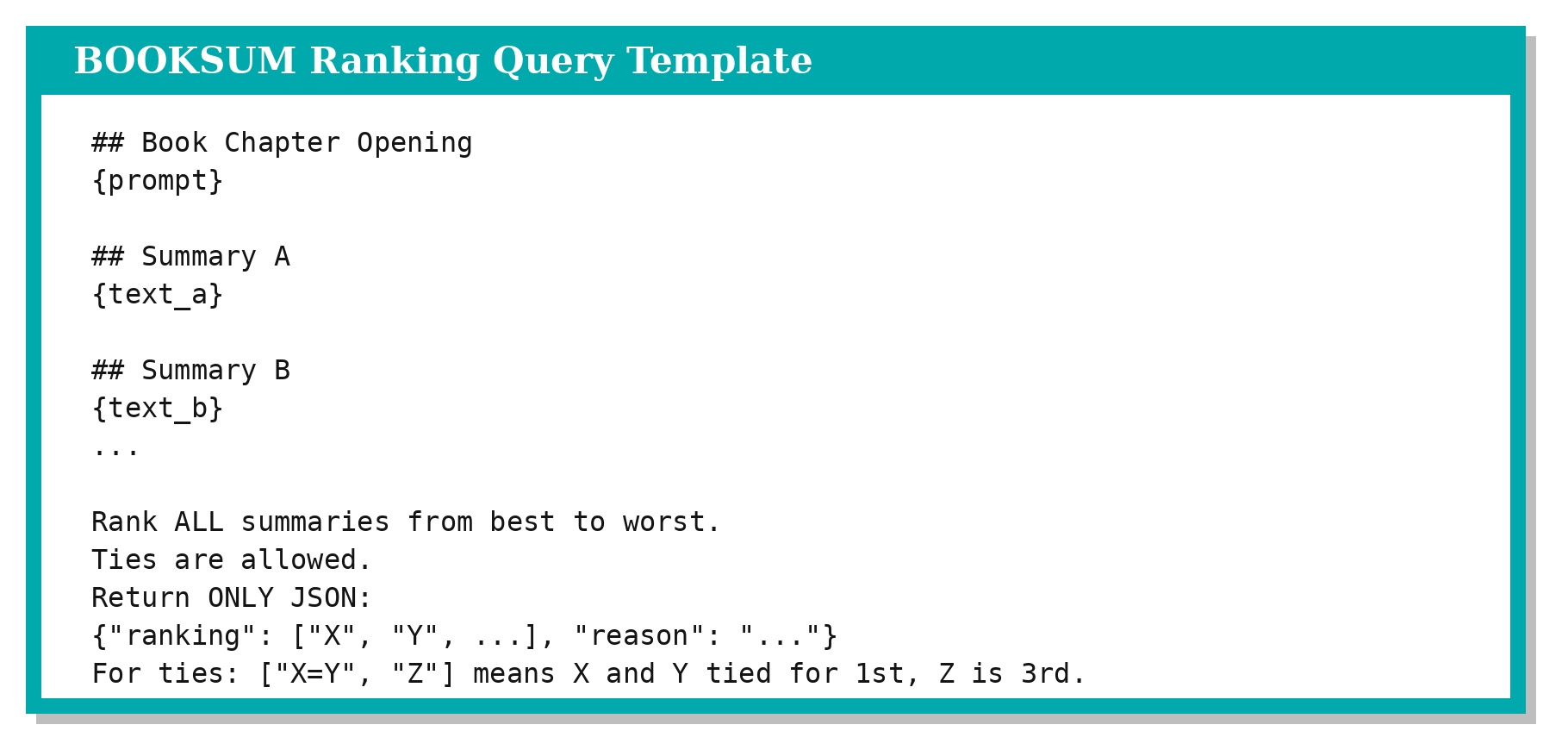}
    \caption{Ranking user query template for \texttt{booksum}.}
    \label{fig:booksum_ranking_query_template}
    \vspace{-2mm}
\end{figure}

For each sample, method outputs are first randomly mapped to blind labels A/B/C/..., then mapped back after parsing.

\subsection{Quality Metrics}\label{app:quality_metrics}

For each generated sample $x$, we compute three automatic diversity metrics in addition to the pairwise LLM judge. Let $\mathcal{S}(x)$ denote the sequence of normalized sentences in $x$ after sentence tokenization, trimming, and lowercasing, and let $\mathcal{G}_n(x)$ denote the sequence of word $n$-grams obtained after lowercasing and whitespace tokenization. Sentence Duplicate percentage (SD) is the fraction of generated sentences that duplicate an earlier sentence in the same output:
\begin{equation}
\operatorname{SD}(x) =
100 \times
\frac{|\mathcal{S}(x)| - |\operatorname{uniq}(\mathcal{S}(x))|}{|\mathcal{S}(x)|}.
\end{equation}
Distinct-2 (D-2) measures bigram diversity:
\begin{equation}
\operatorname{D\text{-}2}(x) =
\frac{|\operatorname{uniq}(\mathcal{G}_2(x))|}{|\mathcal{G}_2(x)|}.
\end{equation}
The 4-gram Repeat percentage (4g) measures repetitions beyond the first occurrence of each 4-gram:
\begin{equation}
\operatorname{4g}(x) =
100 \times
\frac{|\mathcal{G}_4(x)| - |\operatorname{uniq}(\mathcal{G}_4(x))|}{|\mathcal{G}_4(x)|}.
\end{equation}
If the denominator is zero, the corresponding metric is set to $0$. We report the average of each metric over the evaluation set. Lower SD and 4g indicate less repetition, while higher D-2 indicates greater lexical diversity.

\subsection{Attack Details}\label{app:attack}
\begin{figure}[t]
\centering
\includegraphics[width=\linewidth]{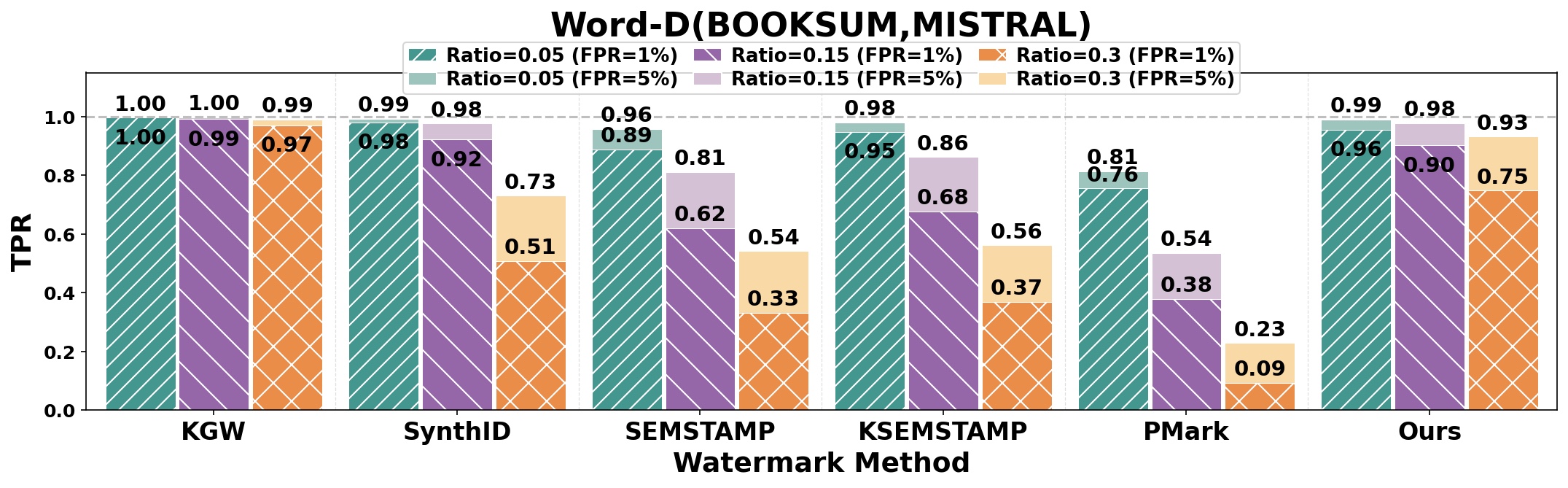}
\vspace{-2mm}
\caption{Word-D attack results.}
\label{fig:word_d}
\end{figure}
\textbf{Doc-P.} We conduct document-level paraphrase attacks using GPT-4.1-mini~\cite{gpt4}. Two prompt templates are used to instruct paraphrasing at different adversarial strengths. Attack Prompt I: ``Please rewrite the following text:''; Attack Prompt II: ``Please rewrite the following text, avoiding using the same words or phrases in the original text:''. All attacks are applied at the paragraph level across methods and settings to simulate real-world paragraph-level paraphrasing scenarios.

\textbf{Doc-T.} We implement the back-translation attack following MarkLLM~\cite{markllm}. Specifically, we use GPT-4.1-mini to translate watermarked English text into Chinese and then back into English in an attempt to remove watermark evidence.

\textbf{Word-D and Word-S.} We implement word deletion (Word-D) and synonym substitution (Word-S) attacks following the official MarkLLM implementation. We set the attack ratios to 5\%, 15\%, and 30\% to evaluate the overall robustness of our method. We apply word-level attacks to the whole paragraph rather than sentence by sentence, unlike PMark~\cite{huo2026pmark}. Additional experimental results are reported in Appendix~\ref{app:word_attack}.

\section{Additional Experiments}\label{app:full_results}

\subsection{Full Results}
Tables~\ref{tab:booksum_full} and~\ref{tab:c4_full} present the complete experimental results with all metrics.

\begin{table*}[t!]
\centering
\caption{Full results on \textbf{BOOKSUM} dataset. For detection, we report TP@FP=1\%/TP@FP=5\%/AUC (\%). \textbf{Doc-P} denotes document-level paraphrase attack (GPT) with Attack Prompt I and II; \textbf{Doc-T} denotes document-level translation attack (GPT). For text quality, we report Sentence Duplicate\% (SD, lower is better), Distinct-2 (D-2, higher is better), and pairwise LLM Judge Win/Lose/Tie\% against unwatermarked text. \textbf{Bold} = best, \underline{underlined} = second-best.}
\vspace{-2mm}
\begin{adjustbox}{width=\textwidth,center}
\begin{tabular}{l|cccc|ccccc}
\toprule
\multirow{2}{*}{\textbf{Method}}
& \multicolumn{4}{c|}{\textit{Detection}}
& \multicolumn{5}{c}{\textit{Text Quality}} \\
\cline{2-10}
& \textbf{No Atk} & \textbf{Doc-P I} & \textbf{Doc-P II} & \textbf{Doc-T(GPT)}
& \textbf{SD\%}$\downarrow$ & \textbf{D-2}$\uparrow$ & \textbf{PS-W\%} & \textbf{PS-L\%} & \textbf{PS-T\%}\\
\midrule
\multicolumn{10}{c}{\textbf{Mistral-Small-3.1-24B-Base-2503}} \\
\midrule
SemStamp~\cite{semstamp} & 96.0/98.0/99.5 & 49.8/64.3/91.1 & 39.8/54.6/85.7 & 58.6/72.7/92.1 & 8.9 & 0.761 & 53.2 & 39.2 & 7.6 \\
$k$-SemStamp~\cite{ksemstamp} & 98.6/99.6/99.9 & 61.7/78.8/94.1 & 43.1/62.1/88.3 & 65.9/81.6/\underline{95.3} & 9.5 & 0.756 & 51.4 & 47.2 & 1.4 \\
KGW~\cite{kgw} & 100.0/100.0/100.0 & \underline{64.0}/\underline{84.0}/\underline{96.6} & 34.4/59.6/89.7 & 54.4/76.0/95.0 & 0.4 & 0.948 & 38.0 & 50.2 & 11.8 \\
SynthID~\cite{synthid} & 99.8/99.8/100.0 & 23.8/47.2/80.8 & 10.4/24.2/64.5 & 12.6/35.4/75.1 & 0.2 & 0.910 & 54.2 & 38.6 & 7.2 \\
MorphMark~\cite{morphmark} & 99.4/99.8/100.0 & 56.0/78.6/94.7 & 34.2/55.8/86.9 & 45.8/74.0/92.7 & 0.2 & 0.942 & 47.6 & 51.4 & 1.0 \\
UPV~\cite{upv} & 99.0/99.4/98.7 & 61.4/77.0/85.7 & 23.6/39.6/65.9 & 38.0/57.2/75.3 & 0.7 & 0.928 & 47.2 & 51.4 & 1.4 \\
EXP~\cite{exp} & 99.8/99.8/100.0 & 38.2/57.4/86.8 & 17.2/25.0/69.9 & 29.6/45.2/82.8 & 10.2 & 0.770 & 52.4 & 46.2 & 1.4 \\
PMark~\cite{huo2026pmark} & 98.0/98.4/99.4 & 51.5/66.7/89.9 & \underline{57.5}/\underline{71.5}/\underline{92.0} & \underline{77.0}/\underline{82.8}/94.3 & 4.0 & 0.788 & 57.2 & 37.6 & 5.2 \\
\method (Ours) & 98.0/99.6/99.8 & \textbf{90.5}/\textbf{97.9}/\textbf{99.5} & \textbf{88.1}/\textbf{97.4}/\textbf{99.3} & \textbf{87.1}/\textbf{97.1}/\textbf{99.1} & 0.1 & 0.816 & 47.9 & 45.2 & 6.9 \\
\midrule
\multicolumn{10}{c}{\textbf{NVIDIA-Nemotron-Nano-9B-v2-Base}} \\
\midrule
SemStamp~\cite{semstamp} & 97.0/98.2/99.7 & 46.9/63.9/90.2 & 34.5/50.5/86.3 & 64.3/76.6/93.9 & 15.9 & 0.687 & 39.2 & 41.6 & 19.2 \\
$k$-SemStamp~\cite{ksemstamp} & 97.6/98.8/99.2 & 51.7/74.3/93.4 & 37.1/58.9/88.6 & 63.7/79.0/93.2 & 17.9 & 0.696 & 35.0 & 45.0 & 20.0 \\
KGW~\cite{kgw} & 100.0/100.0/100.0 & \underline{76.8}/\underline{90.0}/\underline{97.8} & 45.6/\underline{72.2}/\underline{93.1} & \underline{66.8}/85.2/\underline{97.0} & 0.1 & 0.941 & 32.0 & 42.4 & 25.6 \\
SynthID~\cite{synthid} & 99.8/100.0/100.0 & 31.4/52.6/82.8 & 15.6/29.8/68.4 & 21.2/46.0/81.6 & 0.1 & 0.897 & 47.8 & 33.4 & 18.8 \\
MorphMark~\cite{morphmark} & 100.0/100.0/100.0 & 74.0/88.4/97.7 & 46.2/70.8/92.7 & 64.0/\underline{85.6}/96.5 & 0.2 & 0.937 & 36.4 & 40.4 & 23.2 \\
UPV~\cite{upv} & 99.4/99.6/98.8 & 71.0/83.4/87.7 & \underline{48.8}/65.8/78.9 & 61.6/77.6/85.4 & 0.4 & 0.916 & 39.2 & 38.2 & 22.6 \\
EXP~\cite{exp} & 99.8/99.8/100.0 & 42.2/57.6/86.9 & 16.2/25.4/71.7 & 37.8/55.0/83.5 & 16.7 & 0.686 & 30.6 & 45.8 & 23.6 \\
PMark~\cite{huo2026pmark} & 96.6/98.4/99.5 & 24.2/43.5/82.0 & 28.3/48.1/84.0 & 46.3/60.1/87.9 & 15.1 & 0.732 & 26.0 & 49.8 & 24.2 \\
\method (Ours) & 98.8/99.8/99.9 & \textbf{92.6}/\textbf{97.6}/\textbf{99.4} & \textbf{90.2}/\textbf{96.4}/\textbf{99.1} & \textbf{88.2}/\textbf{96.2}/\textbf{99.0} & 0.6 & 0.746 & 34.2 & 46.0 & 19.8 \\
\bottomrule
\end{tabular}
\end{adjustbox}
\label{tab:booksum_full}
\end{table*}

\begin{table*}[t!]
\centering
\caption{Full results on \textbf{C4} dataset. For detection, we report TP@FP=1\%/TP@FP=5\%/AUC (\%). \textbf{Doc-P} denotes document-level paraphrase attack (GPT) with Attack Prompt I and II; \textbf{Doc-T} denotes document-level translation attack (GPT). For text quality, we report Sentence Duplicate\% (SD, lower is better), Distinct-2 (D-2, higher is better), and pairwise LLM Judge Win/Lose/Tie\% against unwatermarked text. \textbf{Bold} = best, \underline{underlined} = second-best.}
\vspace{-2mm}
\begin{adjustbox}{width=\textwidth,center}
\begin{tabular}{l|cccc|ccccc}
\toprule
\multirow{2}{*}{\textbf{Method}}
& \multicolumn{4}{c|}{\textit{Detection}}
& \multicolumn{5}{c}{\textit{Text Quality}} \\
\cline{2-10}
& \textbf{No Atk} & \textbf{Doc-P I} & \textbf{Doc-P II} & \textbf{Doc-T(GPT)}
& \textbf{SD\%}$\downarrow$ & \textbf{D-2}$\uparrow$ & \textbf{PS-W\%} & \textbf{PS-L\%} & \textbf{PS-T\%}\\
\midrule
\multicolumn{10}{c}{\textbf{Mistral-Small-3.1-24B-Base-2503}} \\
\midrule
SemStamp~\cite{semstamp} & 95.1/99.0/99.7 & 28.6/58.1/87.4 & 17.2/31.0/73.2 & 54.7/77.8/92.1 & 8.4 & 0.788 & 56.4 & 43.6 & 0.0 \\
$k$-SemStamp~\cite{ksemstamp} & 92.5/99.5/99.0 & 22.5/59.0/87.5 & 15.5/40.2/79.8 & 44.5/74.5/91.1 & 5.3 & 0.821 & 58.4 & 41.4 & 0.2 \\
KGW~\cite{kgw} & 100.0/100.0/100.0 & \underline{77.7}/\textbf{90.7}/\textbf{97.6} & 29.0/47.2/88.3 & 64.2/84.5/\underline{96.3} & 0.0 & 0.953 & 41.6 & 57.8 & 0.6 \\
SynthID~\cite{synthid} & 99.2/99.2/99.7 & 44.1/58.3/84.4 & 25.1/37.2/69.9 & 44.1/59.4/83.6 & 0.1 & 0.921 & 59.1 & 40.5 & 0.4 \\
MorphMark~\cite{morphmark} & 99.7/100.0/100.0 & 55.5/77.4/95.4 & 28.0/51.4/\underline{89.2} & 52.7/76.8/95.8 & 0.0 & 0.947 & 41.7 & 57.9 & 0.4 \\
UPV~\cite{upv} & 99.7/100.0/99.7 & 67.1/85.9/84.3 & 44.5/\underline{74.6}/75.2 & \underline{71.5}/\underline{92.8}/85.8 & 0.1 & 0.931 & 44.0 & 55.6 & 0.4 \\
EXP~\cite{exp} & 99.4/99.4/99.4 & 24.6/48.0/82.6 & 11.5/22.7/70.5 & 36.4/60.4/88.0 & 6.6 & 0.810 & 50.8 & 49.0 & 0.2 \\
PMark~\cite{huo2026pmark} & 96.2/96.4/98.9 & 40.0/54.0/84.3 & \underline{48.9}/60.4/86.4 & 67.6/73.8/92.8 & 3.9 & 0.811 & 64.8 & 35.0 & 0.2 \\
\method (Ours) & 94.5/97.6/99.5 & \textbf{78.6}/\underline{89.3}/\underline{97.5} & \textbf{76.8}/\textbf{87.5}/\textbf{97.0} & \textbf{84.2}/\textbf{93.0}/\textbf{98.3} & 0.5 & 0.814 & 82.0 & 5.0 & 13.0 \\
\midrule
\multicolumn{10}{c}{\textbf{NVIDIA-Nemotron-Nano-9B-v2-Base}} \\
\midrule
SemStamp~\cite{semstamp} & 96.4/98.0/99.0 & 40.4/54.8/87.4 & 27.9/44.2/78.4 & 61.7/71.6/90.6 & 14.3 & 0.700 & 52.4 & 44.8 & 2.8 \\
$k$-SemStamp~\cite{ksemstamp} & 90.5/99.2/98.2 & 15.6/49.2/84.4 & 14.6/40.3/81.1 & 32.1/65.7/88.4 & 13.9 & 0.721 & 39.2 & 56.2 & 4.6 \\
KGW~\cite{kgw} & 100.0/100.0/100.0 & 55.9/80.3/\underline{96.0} & 27.3/49.9/\underline{88.6} & 56.2/80.6/\underline{96.2} & 0.2 & 0.957 & 39.6 & 56.0 & 4.4 \\
SynthID~\cite{synthid} & 100.0/100.0/100.0 & 42.9/59.9/84.1 & 22.6/35.0/69.0 & 53.0/69.3/87.6 & 0.0 & 0.904 & 54.8 & 41.4 & 3.8 \\
MorphMark~\cite{morphmark} & 99.8/100.0/100.0 & 48.3/71.4/93.8 & 20.4/41.5/85.8 & 49.0/78.1/95.3 & 0.1 & 0.951 & 45.0 & 50.8 & 4.2 \\
UPV~\cite{upv} & 99.7/100.0/99.8 & \underline{75.0}/\textbf{91.7}/87.9 & \underline{53.2}/\underline{79.2}/77.7 & \underline{79.5}/\textbf{94.6}/88.0 & 0.2 & 0.923 & 42.2 & 53.2 & 4.6 \\
EXP~\cite{exp} & 100.0/100.0/99.8 & 39.1/50.3/83.4 & 16.9/25.1/69.7 & 58.6/70.6/90.2 & 12.4 & 0.748 & 36.7 & 58.4 & 5.0 \\
PMark~\cite{huo2026pmark} & 96.4/97.8/98.8 & 24.7/39.7/78.9 & 28.3/40.0/79.3 & 45.0/57.4/86.1 & 9.6 & 0.728 & 40.0 & 55.0 & 5.0 \\
\method (Ours) & 93.8/97.0/99.2 & \textbf{78.3}/\underline{90.6}/\textbf{97.2} & \textbf{72.9}/\textbf{89.2}/\textbf{97.1} & \textbf{82.3}/\underline{93.8}/\textbf{97.8} & 0.4 & 0.826 & 69.4 & 5.2 & 25.4 \\
\bottomrule
\end{tabular}
\end{adjustbox}
\label{tab:c4_full}
\end{table*}

\subsection{Additional Robustness--Quality Trade-off Plots}\label{app:tradeoff_more}
In Figure~\ref{fig:trade-off-appendix}, we display the robustness-quality trade-off plots for various backbones on BOOKSUM and C4.
\begin{figure*}[t]
\centering
\begin{minipage}{0.32\linewidth}
\centering
\includegraphics[width=\linewidth]{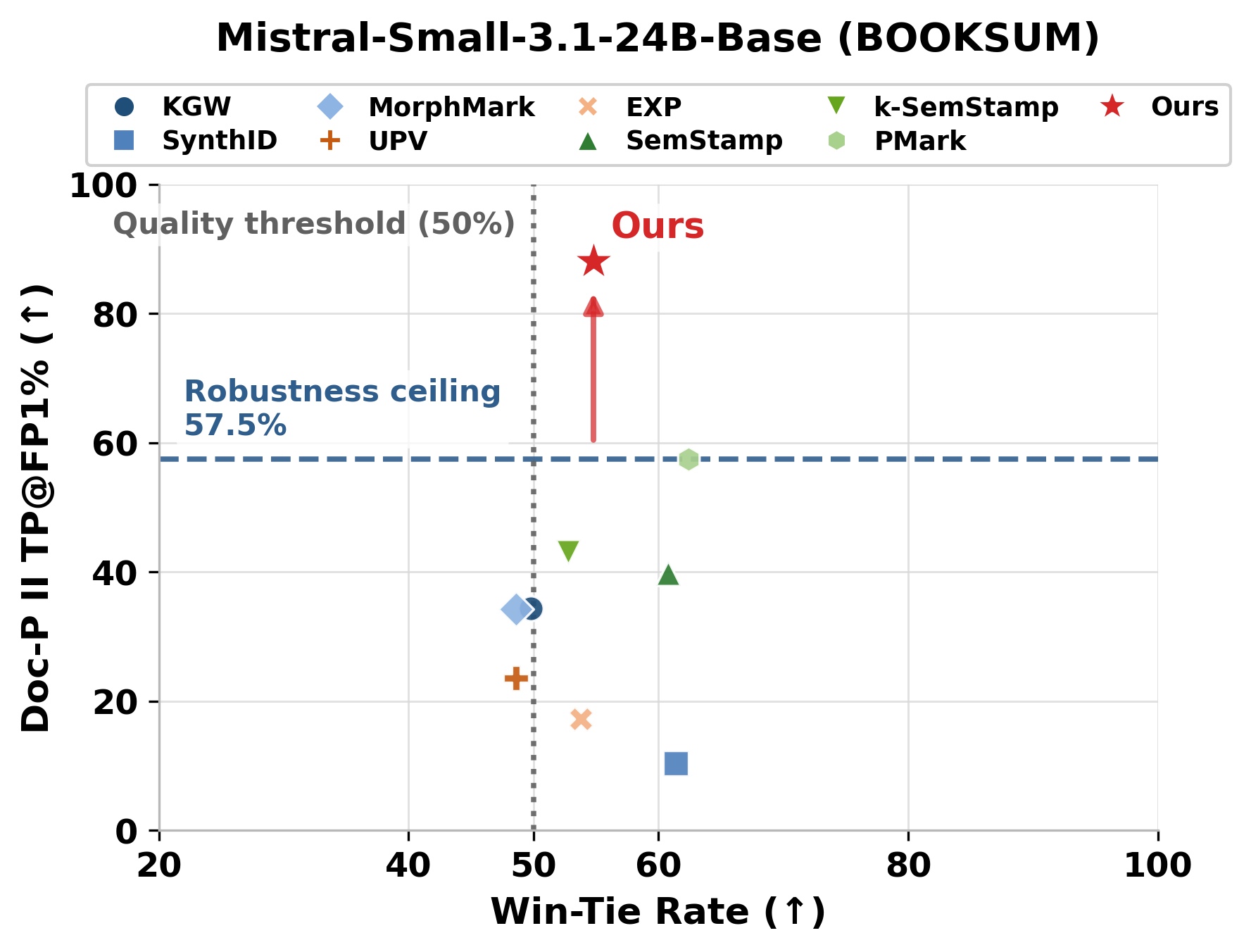}
\caption{Mistral + BOOKSUM}
\end{minipage}\hfill
\begin{minipage}{0.32\linewidth}
\centering
\includegraphics[width=\linewidth]{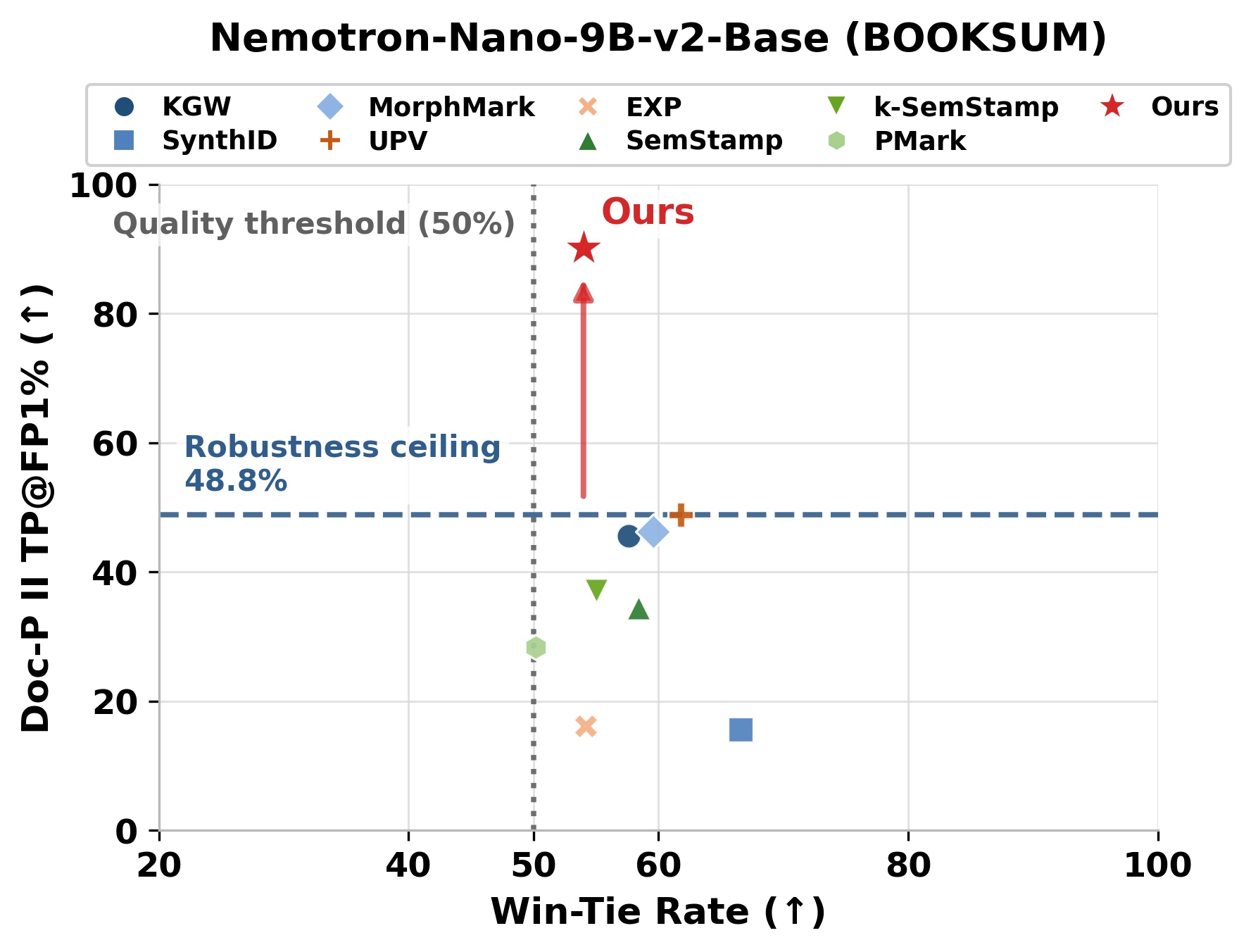}
\caption{Nemotron + BOOKSUM}
\end{minipage}\hfill
\begin{minipage}{0.32\linewidth}
\centering
\includegraphics[width=\linewidth]{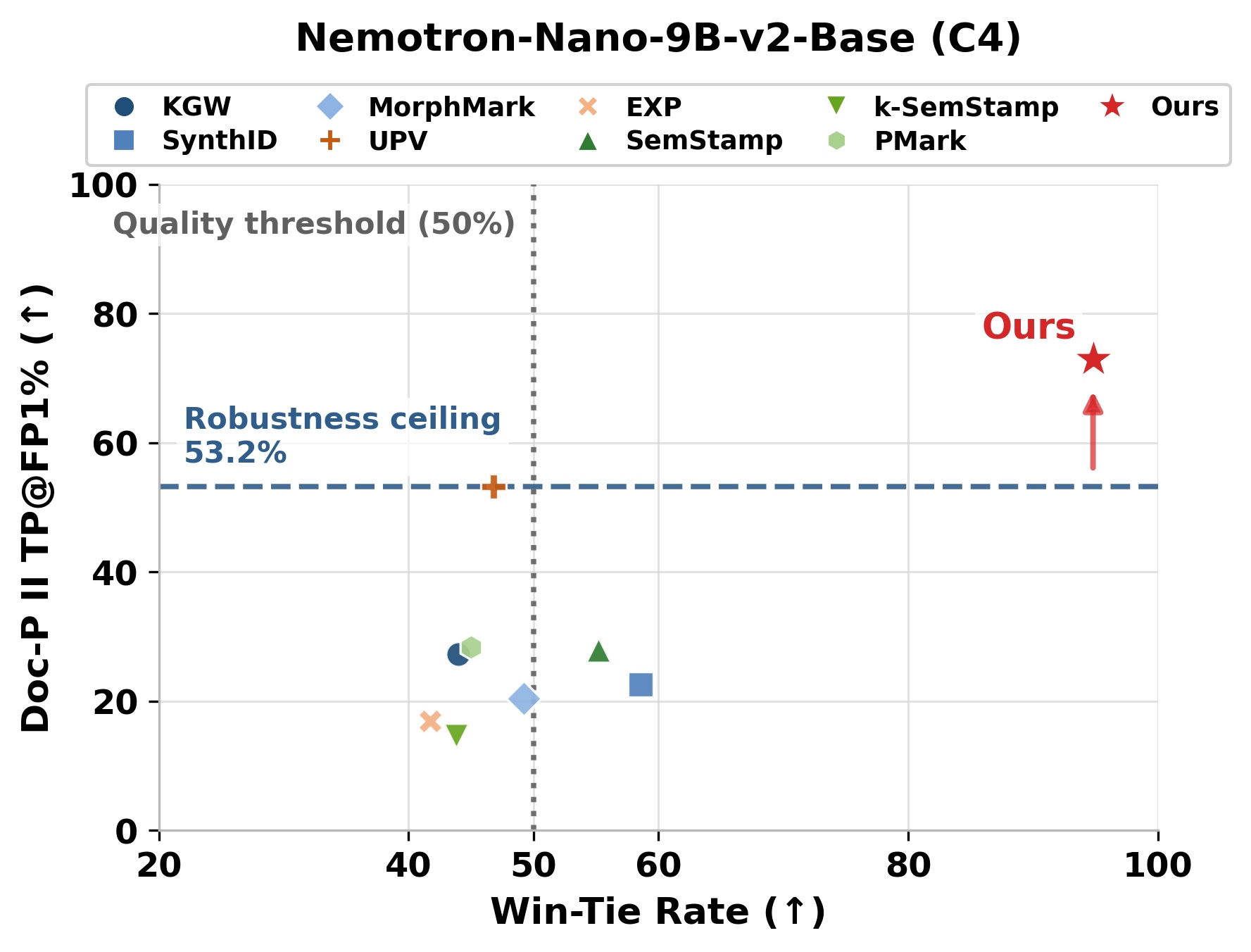}
\caption{Nemotron + C4}
\end{minipage}
\caption{Additional robustness--quality trade-off plots corresponding to the three settings not shown in the main text.}
\label{fig:trade-off-appendix}
\end{figure*}

\subsection{Word-Deletion Attack Results}\label{app:word_attack}
In Figure~\ref{fig:word_d}, we illustrate the results of various watermark methods under Word-D attacks.

\subsection{Illustration of SAMark Sampling Process}
\begin{figure*}[t] 
\centering
\includegraphics[width=0.85\linewidth]{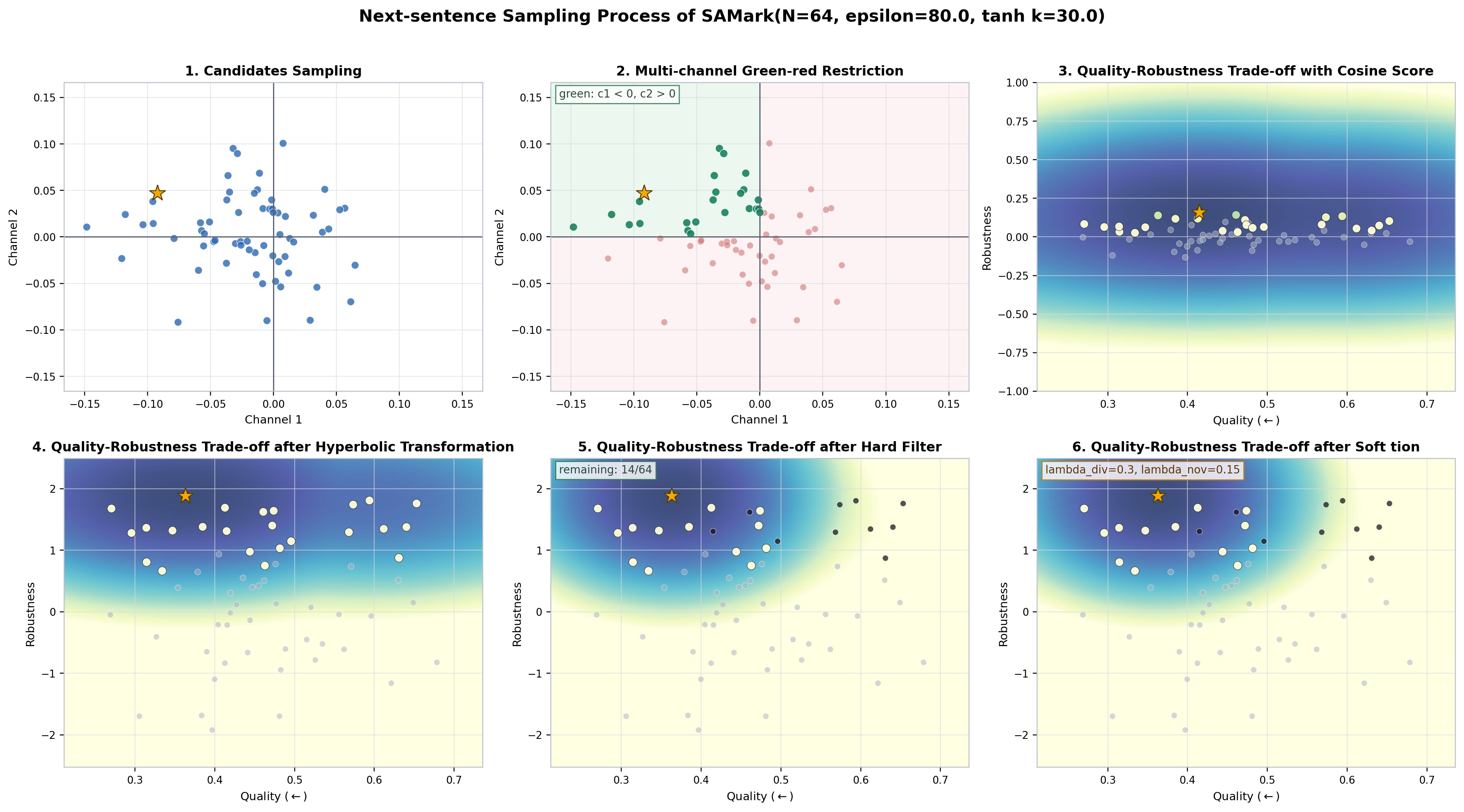}
\caption{Illustration of the next-sentence sampling process of SAMark.}
\label{fig:samark_demo} 
\end{figure*}
In Figure~\ref{fig:samark_demo}, each point denotes one sampled next-sentence candidate $s_i$ from a pool of size $N$.
Let $e_i$ be its sentence embedding, $e_{\mathrm{prev}}$ be the previous selected sentence embedding, and $\{v_k\}_{k=1}^{b}$ be channel pivots with channel signs $w_k\in\{-1,+1\}$.
We define channel cosine values
\begin{equation}
c_{ik}=\cos(e_i,v_k),\quad \tilde{c}_{ik}=w_k\,c_{ik},
\end{equation}
and the quality score
\begin{equation}
q_i=\cos(e_i,e_{\mathrm{prev}}).
\end{equation}
The robustness score before and after hyperbolic transformation is
\begin{equation}
r_i^{\mathrm{raw}}=\sum_{k=1}^{b}\tilde{c}_{ik},\qquad
r_i^{\tanh}=\sum_{k=1}^{b}\tanh(\kappa\,\tilde{c}_{ik}),
\end{equation}
where $\kappa$ is the tanh scale.

Panel 1 plots $(c_{i1},c_{i2})$ for all candidates.
Panel 2 applies the multi-channel green-red restriction and keeps candidates with matched channel signs (mask $m_i^{\mathrm{green}}$).
Panels 3--4 visualize the quality-robustness trade-off $(q_i,r_i)$ with sampling probabilities
\begin{equation}
p_i \propto \exp(\epsilon r_i),
\end{equation}
computed on active candidates under $m_i^{\mathrm{green}}$, using $r_i^{\mathrm{raw}}$ in Panel 3 and $r_i^{\tanh}$ in Panel 4.

Panel 5 adds a hard filter mask $m_i^{\mathrm{hard}}$ (e.g., overlap and semantic-similarity constraints), and samples by
\begin{equation}
p_i^{\mathrm{hard}} \propto \exp\!\big(\epsilon r_i^{\tanh}\big)\,\mathbf{1}[m_i^{\mathrm{hard}}].
\end{equation}
Panel 6 adds soft regularization with diversity and novelty:
\begin{equation}
\begin{aligned}
r_i^{\mathrm{reg}}=r_i^{\tanh}+\lambda_{\mathrm{div}}d_i+\lambda_{\mathrm{nov}}n_i,\\
p_i^{\mathrm{reg}} \propto \exp\!\big(\epsilon r_i^{\mathrm{reg}}\big)\,\mathbf{1}[m_i^{\mathrm{hard}}],
\end{aligned}
\end{equation}
where $d_i=1-\mathrm{sim}_i$ and $n_i$ is a lexical novelty bonus.
The final selected sentence is the argmax under $p_i^{\mathrm{reg}}$.

\end{document}